\journal{Applied Mathematics and Computation}
\newtheorem{thm}{Theorem}[section]
\newtheorem{definition}{Definition}
\begin{document}

\begin{frontmatter}

\title{How to Run a Campaign: Optimal Control of SIS and SIR Information Epidemics}
\author{Kundan Kandhway\corref{cor1}}
\ead{kundan@dese.iisc.ernet.in}
\cortext[cor1]{Corresponding Author}
\author{Joy Kuri}
\ead{kuri@dese.iisc.ernet.in}
\address{Department of Electronic Systems Engineering, Indian Institute of Science, Bangalore 560012, India.}

\begin{abstract}
Information spreading in a population can be modeled as an epidemic. Campaigners (\emph{e.g.} election campaign managers, companies marketing products or movies) are interested in spreading a message by a given deadline, using limited resources. In this paper, we formulate the above situation as an optimal control problem and the solution (using Pontryagin's Maximum Principle) prescribes an optimal resource allocation over the time of the campaign. We consider two different scenarios---in the first, the campaigner can adjust a direct control (over time) which allows her to recruit individuals from the population (at some cost) to act as spreaders for the Susceptible-Infected-Susceptible (SIS) epidemic model. In the second case, we allow the campaigner to adjust the effective spreading rate by incentivizing the infected in the Susceptible-Infected-Recovered (SIR) model, in addition to the direct recruitment. We consider time varying information spreading rate in our formulation to model the changing interest level of individuals in the campaign, as the deadline is reached. In both the cases, we show the existence of a solution and its uniqueness for sufficiently small campaign deadlines. For the fixed spreading rate, we show the effectiveness of the optimal control strategy against the constant control strategy, a heuristic control strategy and no control. We show the sensitivity of the optimal control to the spreading rate profile when it is time varying.
\end{abstract}

\begin{keyword}

Information Epidemics \sep Optimal Control \sep Pontryagin's Maximum principle \sep Social Networks \sep  Susceptible-Infected-Recovered (SIR) \sep Susceptible-Infected-Susceptible (SIS).

\end{keyword}

\end{frontmatter}

\section{Introduction}

Use of social networks by political campaigners and product marketing managers is increasing day by day. It gives them an opportunity to influence a large population connected via the online network, as well as the human network where two individuals interacting with each other in daily life are connected. A piece of information, awareness of brands, products, ideas and political ideologies of candidates spreads through such a network much like pathogens in the human network, and this phenomenon is called an ``information epidemics''. The goal of the campaigner is to `infect' as many individuals or nodes as possible with the message by the campaign deadline. Such an effort incurs advertisement cost. Some campaigns aim to create a `buzz' about some topic by engaging people in a conversation about some topic (a scenario encountered in political campaigns, for example). Others are more focused: for example, advertisements to maximize the sale of a product or promotion of a movie. Resource limitations (monetary, manpower or otherwise) indicate the need to formulate optimal campaigning strategies which can achieve these goals at minimum cost. Such an optimization problem can be formulated as an optimal control problem. An optimal control problem aims to optimize a cost functional (a function of state and control variables) subject to state equation constraints that govern system evolution.

In this paper, we aim to address the problems described above. We assume a homogeneously mixed population. Individuals communicate and exchange messages with one another on their own, giving rise to information epidemics. Information can be communicated to the population directly by the campaigner (direct recruitment of individuals to spread the message). However, recruiting individuals and direct information communication comes with a cost (such as placing advertisement in the mass media). A campaigner may also provide incentives to individuals for spreading the message. Such an incentive is termed as a word-of-mouth incentive and it rewards an individual who refers a product or a piece of information to others.

The campaigner possesses limited resources and is unable to communicate information to the entire population. Not only resource allocation among different strategies, but also the timing of direct recruitment of individuals and giving out word-of-mouth incentives are crucial for maximizing the information epidemic. We model the information spreading process as a Susceptible-Infected-Susceptible (SIS) and Susceptible-Infected-Recovered (SIR) epidemic process with time varying information spreading rate, and formulate an optimal control problem which aims to minimize the campaign cost over a given period of time.

SIS and SIR epidemic processes are suitable for modeling information epidemics due to the similarities in the ways disease spreads in a biological network and this information spreads in social networks. When susceptible and infected individuals interact, the topic of interest may come up with some probability, which will lead to transfer of information from infected to susceptible individual. This process is very similar to the way in which a communicable disease spreads in a population. The SIS model is suitable for cases when we are trying to engage the population in a conversation about some topic. Such a scenario can be encountered in political campaigns. SIS allows infected nodes to `recover' back to the susceptible state, so that it can receive a different message about the same topic. The SIR model is suitable for situations where nodes participate in message spreading for random amounts of time and then recover (and stop message dissemination). Such a scenario may be encountered in viral marketing of a newly launched product or promotion of a movie, where information about the product is transmitted by enthusiastic individuals who gradually lose interest in promoting the product.

\textbf{\emph{Related Work:}} Although there are a lot of studies on \emph{preventing} the spread of disease and computer viruses in human and computer networks through optimal control \citep{asano2008optimal,behncke2000,gaff2009optimal,lashari2012optimal,ledzewicz2011optimal,morton1974,sethi1978,yan2008,zhu2012optimal,castilho2006optimal}, information epidemics have attracted less attention (see for example \cite{karnik2012}). Apart from differences in epidemic models, our objective is to maximize the spread of information while the studies discussed above aim to contain the epidemic. To be more precise, the above studies aim to remove individuals from the infected class through the application of control signal(s), while our aim is the opposite. In addition, the cost functions used by \cite{behncke2000,morton1974} and \cite{sethi1978} are linear in control, while our cost is quadratic in control. \cite{yan2008} and \cite{zhu2012optimal} assumed a time-varying state variable in the cost function, while our formulation considers only the final system state. \cite{asano2008optimal} considered a metapopulation model of epidemics which is different from the models used in this paper. The SIR model is used in \cite{gaff2009optimal} and \cite{ledzewicz2011optimal}, but the controls (level of vaccination and treatment) are specific to biological epidemics and unsuitable for information epidemics. The work in \cite{castilho2006optimal} aims to prevent the epidemic and the author uses educational campaigns as controls. The educational campaigns encourage susceptibles to protect themselves from the disease and increase the removal rate of infected individuals.

\citep{chierichetti2009} and \citep{pittel1987spreading} analyze the so called Push, Pull and Push-pull algorithms for rumor spreading on technological and social networks. The connection graph of the nodes is known. The authors fix the strategy for information spreading: nodes may either `Push' the information to their neighbors, `Pull' the information from them or do both. The aim is to compute bounds on the number of communication rounds required to distribute the message to almost all the nodes for the given connection graph. Similarly, the authors in \cite{boyd2005} aim to compute bounds on the number of communication rounds required for distributed computation of the average value of sensor readings (\emph{e.g.} temperature) for sensors deployed in a field. Note that finding optimal strategies for information spreading is not the aim of \citep{chierichetti2009,pittel1987spreading,boyd2005}.

The study in \cite{belen2008} defines an information or a joke spreading in a population as a rumor. It aims to maximize the spread of rumor in the Daley-Kendall and Maki-Thompson models, which are different from the Kermack-McKendrick SIS/SIR models used in this paper. We believe that certain scenarios like political campaigns are better modeled by Kermack-McKendrick SIS model (recovery is independent of interaction between individuals) than the Daley-Kendall/Maki-Thompson models (when two infected meet, one or both recover). Moreover, \cite{belen2008} used impulse control, while our models assume the system can be controlled over the whole campaign period. The Authors in \cite{sethi2008optimal} and \cite{krishnamoorthy2010optimal} devise optimal advertisement and pricing strategies for newly launched products. But they do not consider viral message propagation, where individuals in the population interact with one another to spread a piece of information, as is the case in this paper.

Optimal control of spreading software security patches in technological networks is discussed in \cite{khouzani2011optimal}. The system model used by \cite{khouzani2011optimal} is tailored to technological networks and is different from the one used in our study. The study uses a four-compartment model (instead of the two and three compartment models used in our study) with an additional compartment for representing the number of nodes destroyed by the malware (apart from susceptibles, infected and recovered nodes). In addition, the controls used are the fraction of disseminators and the dissemination rate of the security patches, which are different from the controls used in this paper.

Cost minimization for marketing campaigns is explored in \cite{karnik2012}. The cost functional used there is linear in control, while our model assumes a quadratic cost. Although \cite{karnik2012} uses SIS and SIR models like we do, but the objective functional is different from ours. The objective functionals in our models are weighted sums of the fraction of nodes which have received the message by the campaign deadline and the costs of applying controls. In contrast, the objective functional in \cite{karnik2012} for the SIS case is to minimize the campaign cost (subject to a constraint on the fraction of infected nodes at the deadline) and for the SIR case, the objective is to maximize the fraction of nodes who have received the message (subject to a budget constraint for running the campaign). Moreover, we discuss the uniqueness of the solutions to our models; this is not discussed by \cite{karnik2012} and \cite{khouzani2011optimal}. Furthermore, we study the effect of word-of-mouth control on epidemic spreading, which is not explored in the above studies.

\textbf{\emph{A major difference between an information epidemic and a biological epidemic}} is that in the case of a biological epidemic, the infection rate and recovery rate are constant throughout the season (assuming that pathogens do not mutate within a season). On the other hand, the interest level of the population during the campaign period (for elections or promotions of upcoming movies) changes as we approach the deadline (poll date or movie release date). \emph{We have modeled this by making the effective information spreading rate a time varying quantity.} Previous studies have ignored this characteristic of information epidemics.

\textbf{\emph{The following are the primary contributions of this paper:}}
\begin{enumerate}[(i)]
\item Formulation of optimal control problems for maximizing information spread in two different models. In the first, information spreads through an SIS process and in the next, through an SIR process. The control signal in the SIS model is the intensity with which direct recruitment of spreaders from the population can be done. The SIR model has an additional control signal---the word-of-mouth control, which controls the spreading rate of the information epidemic. Our formulation involves a quadratic cost function and a time dependent effective information spreading rate.
\item We show the existence of a solution in both the problems.
\item We establish uniqueness of the solution in both the cases, when the campaign deadline is sufficiently small.
\item We compare the overall cost incurred by the optimal control strategy with the constant control strategy, a heuristic control strategy and no control, when the information spreading rate is constant throughout the campaign period.
\item When the effective spreading rate varies over the campaign duration, we demonstrate the sensitivity of the optimal control with respect to different effective spreading rate profiles.
\end{enumerate}

The rest of the paper is organized as follows: Sections \ref{sec:SIS_sys_model} and \ref{sec:SIR_sys_model} formulate the optimal control problem for SIS and SIR information epidemics, respectively. Sections \ref{sec:SIS_model_analysis} and \ref{sec:SIR_model_analysis} analyze the respective problems. Results are shown in Section \ref{sec:Results} and conclusions are drawn in Section \ref{sec:Conclusions}.

\section{System Model and Problem Formulation: SIS Epidemic}
\label{sec:SIS_sys_model}
We consider a system of $N$ nodes (or individuals) which is fixed throughout the campaign time $0\leq t\leq T$. In the case of SIS epidemics, individuals are divided into two compartments---susceptible (those who are yet to receive a tagged message) and infected (those who have already received the message). A susceptible node becomes infected at a certain rate if it comes in contact with an infected individual. The idea is to create a `buzz' in the population about some topic (\emph{e.g.} a political campaign). An infected node can `recover' back to the susceptible state which allows it to receive a different message about the same topic. This makes SIS a suitable model for such scenarios. By the campaign deadline $T$, the number of infected individuals (who are engaged in some sort of discussion about the topic) is the quantity of interest.

Let $S(t)$, $I(t)$ denote the number of susceptible and infected nodes at time $t$. Let $s(t)=S(t)/N \ge 0$ and $i(t)=I(t)/N \ge 0$, therefore $s(t)+i(t)=1$. The information spreading rate at time $t$ is denoted by $\beta'(t)$, which is assumed to be a bounded quantity. Practical considerations will impose such a restriction on $\beta'(t)$. In a small interval $dt$ at time $t$, a susceptible node that is in contact with a single infected node, changes its state to ``infected'' with probability $\beta'(t) dt$. We assume that each node in the population is in contact with an average of $k_m$ others, chosen randomly, at any time instant $t$. Thus, any susceptible node will have an average of $k_m i(t)$ infected neighbors and will acquire information with probability $1-(1-\beta'(t)dt)^{k_mi(t)} \simeq \beta'(t) k_m i(t) dt$. Since a fraction $s(t)$ of population is susceptible, the rate of increase of infected nodes in the population due to susceptible-infected contact is $\beta'(t) k_m i(t) s(t)$. We define the effective spreading rate at time $t$ as $\beta(t) = \beta'(t) k_m$. An infected node ``falls back'' to being susceptible with a rate $\gamma$. In the limit of large $N$, the mean field equations for the evolution of $s(t)$ and $i(t)$ in the (uncontrolled) SIS process is given by \cite[equations adapted for time varying $\beta(t)$]{barrat2008dynamical},
\begin{eqnarray}
\dot{s}(t) &=& -\beta(t) s(t) i(t) + \gamma i(t),  \nonumber \\ \dot{i}(t) &=& \beta(t) s(t) i(t) - \gamma i(t).   \nonumber
\end{eqnarray}

The objective functional is chosen to be $J = -i(T) + \int_{0}^{T} b u^2(t)dt$. The rationale behind such a choice is as follows. Applications like poll campaigns only care about the final number of infected individuals on the polling day, $i(T)$, and not on the evolution history, $i(t), 0\leq t <T$. This is captured in the first term of the objective functional. The cost of running the campaign accrues over time and this is represented by the integral. The choice of quadratic cost function is consistent with the literature \citep{gaff2009optimal,zhu2012optimal}.

Let $u$---a bounded Lebesgue integrable function---denote the recruitment control applied by the campaigner, with $u(t)$ representing its value at time $t$. We define $U$ as follows: 

\begin{definition}
\label{def:SIS_U}
\begin{eqnarray}
u\in U& \triangleq &\{u:u~\textnormal{is Lebesgue integrable},~0\leq u(t)\leq u_{max}\}. \nonumber
\end{eqnarray}
\end{definition}
Thus, $u_{max}$ uniformly bounds all functions in $U$. The control signal $u$ denotes the rate at which nodes are directly recruited from the population to act as spreaders. Practical constraints on executing the control will impose the property of boundedness on the control signal. Thus the optimal control problem can be formulated as:
\begin{eqnarray}
\underset{u\in U}{\text{min}}~~ J &=& - i(T) + \int_{0}^{T} b u^2(t)dt \label{eq:SIS_prob} \\
\text{subject to:} \ \ \dot{s}(t) &=& -\beta(t) s(t) i(t) + \gamma i(t) - u(t)s(t)  \nonumber \\
\dot{i}(t) &=& \beta(t) s(t) i(t) - \gamma i(t) + u(t)s(t)  \\ 
i(t) &\geq& 0, \  \ s(t) \geq 0   \nonumber \\
i(t) &+& s(t) \ = \ 1 \nonumber \\
i(0) &=& i_0, \ \ s(0) = 1 - i_0. \nonumber
\end{eqnarray}
Here, $i_0$ denotes the initial fraction of infected nodes who act as seeds of the epidemic.

\section{System Model and Problem Formulation: SIR Epidemics}
\label{sec:SIR_sys_model}

An SIR epidemic model has an additional compartment---recovered---in addition to the susceptible and infected classes discussed before. It is suitable in modeling situations where nodes participate in message spreading for a random amount of time and then ``recover'' (and stop message dissemination). Such a scenario may be encountered in viral marketing of a newly launched product or promotion of a movie, where enthusiastic individuals gradually lose interest in promoting the product.

Let $R(t)$ and $r(t)=R(t)/N\geq 0$ denote the number and fraction of recovered nodes at time $t$ so that $s(t)+i(t)+r(t)=1$. The effective information spreading rate at time $t$ is $\beta(t)$. Simultaneously, infected nodes switch to ``recovered" at a rate $\gamma$, independent of others. The mean field equations governing the SIR process in the limit of large $N$ are \cite[adapted for variable $\beta(t)$]{barrat2008dynamical}:
\begin{eqnarray}
\dot{s}(t) &=& -\beta(t) s(t) i(t) \nonumber \\
\dot{i}(t) &=& \beta(t) s(t) i(t) - \gamma i(t) \nonumber \\
\dot{r}(t) &=& \gamma i(t).  \nonumber
\end{eqnarray}

In this case we assume that the campaigner can allocate her resources in two ways. At time $t$, she can directly recruit individuals from the population with rate $u_1(t)$, to act as spreaders (via advertisements in mass media). In addition, she can incentivize infected individuals to make further recruitments (\emph{e.g.} monetary benefits, discounts or coupons to current customers who refer their friends to buy services/products from the company). This effectively increases the spreading rate of the message at time $t$ from $\beta(t)$ to $\big(\beta(t)+u_2(t)\big)$ where $u_2(t)$ denotes the ``word-of-mouth'' control signal which the campaigner can adjust at time $t$. The controls, $u_1 \in U_1$ and $u_2\in U_2$, where the sets are defined as follows:
\begin{definition}
\label{def:SIR_U1_U2}
\begin{eqnarray}
U_1&\triangleq&\{u:u~\textnormal{is Lebesgue integrable},~0\leq u(t)\leq u_{1max}\}, \nonumber \\
U_2&\triangleq&\{u:u~\textnormal{is Lebesgue integrable},~0\leq u(t)\leq u_{2max}\}. \nonumber
\end{eqnarray}
\end{definition}
The cost of applying the control is quadratic over the time horizon of the campaign, $0\leq t\leq T$, and the reward is the total fraction of population which received the message at some point in time, i.e., $i(T)+r(T)=1-s(T)$. The optimal control problem can then be stated as:
\begin{eqnarray}
\underset{u_1\in U_1,u_2\in U_2}{\text{min }} J &=& -1 + s(T) + \int_{0}^{T} \big(bu_1^2(t)+cu_2^2(t)\big)dt \label{eq:prob_SIR_word_of_mouth} \\
\text{subject to} ~~ \dot{s}(t) &=& -\big(\beta(t) +  u_2(t)\big)s(t) i(t) - u_1(t)s(t)  \nonumber \\
\dot{i}(t) &=& \big(\beta(t) + u_2(t)\big)s(t) i(t) + u_1(t)s(t) - \gamma i(t)    \nonumber \\
\dot{r}(t) &=& \gamma i(t)  \nonumber \\
i(t) &\geq& 0,~~ s(t) \geq 0, ~~ r(t) \geq 0   \nonumber \\
i(t) &+& s(t)~ +~ r(t)~ =~ 1 \nonumber \\
i(0) &=& i_0,~~ s(0) = 1 - i_0,~~ r(0) = 0. \nonumber
\end{eqnarray}

\section{Analysis of the Controlled SIS Epidemic}
\label{sec:SIS_model_analysis}
Substituting $s(t)=1-i(t)$, problem (\ref{eq:SIS_prob}) can be rewritten as,
\begin{eqnarray}
\underset{u \in U}{\text{min}}~~ J &=& -i(T) + \int_{0}^{T} b u^2(t)dt \label{eq:SIS_prob_reduced} \\
\text{subject to} ~~ \dot{i}(t) &=& - \beta(t) i^2(t) + \big(\beta(t) - \gamma - u(t)\big)i(t) + u(t) \label{eq:SIS_constraint} \\
0  &\leq&  i(t) \leq 1 \ \nonumber \\
i(0) &=& i_0 \label{eq:SIS_init_cond}.
\end{eqnarray}

\subsection{Existence of a Solution} 
\label{sec:ExistenceSIS}

\begin{thm}
There exist an optimal control signal $u\in U$ and a corresponding solution $i^*(t)$ to the initial value problem (\ref{eq:SIS_constraint}) and (\ref{eq:SIS_init_cond}) such that $u\in\underset{u\in U}{\textnormal{argmin }}\{J(u)\}$ in problem (\ref{eq:SIS_prob_reduced}).
\end{thm}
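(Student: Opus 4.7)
The plan is to invoke a standard existence theorem for optimal control problems with bounded control and quadratic running cost (for example, the Fleming--Rishel theorem, or equivalently the direct method based on weak-$L^2$ compactness together with convexity-lower-semicontinuity). This reduces the task to verifying a short checklist of structural hypotheses on the state dynamics (\ref{eq:SIS_constraint}) and on the cost functional in (\ref{eq:SIS_prob_reduced}).

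The first step I would carry out is an a priori bound on the state: for every admissible $u\in U$, the solution $i(\cdot)$ of (\ref{eq:SIS_constraint})--(\ref{eq:SIS_init_cond}) stays in $[0,1]$. This is a one-line check from the ODE itself, since at $i=0$ the right-hand side equals $u(t)\ge 0$ and at $i=1$ it collapses to $-\gamma\le 0$, so the interval $[0,1]$ is forward invariant. Together with $\beta(t)$ bounded and $0\le u\le u_{max}$, this also gives a uniform bound on $\dot i$, so the family of admissible state trajectories is equibounded and equicontinuous on $[0,T]$.

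Next I would verify the remaining structural conditions: (a) the admissible set is nonempty (take $u\equiv 0$); (b) the control set $U$ is convex and closed, being the intersection of a closed ball in $L^\infty$ with the positive cone; (c) the right-hand side of (\ref{eq:SIS_constraint}), viewed on the bounded $i$-range from Step~1, is polynomial in $i$ and linear (in particular affine) in $u$, so it satisfies a linear growth bound in $(i,u)$ and the associated velocity set is convex; (d) the integrand $bu^2$ is convex in $u$ and satisfies the coercivity estimate $bu^2\ge b|u|^2$ with exponent $2>1$; and (e) the terminal payoff $-i(T)$ is a continuous function of the state, and $i(T)\in[0,1]$ is uniformly bounded.

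The main obstacle I anticipate is the passage to the limit along a minimizing sequence $\{u_n\}\subset U$. The $L^\infty$ bound yields, up to a subsequence, a weak-$L^2$ limit $u^*\in U$ (the set is weakly closed by convexity), and weak lower semicontinuity of $\int_0^T bu_n^2\,dt$ is standard for convex integrands. The nontrivial part is showing that the corresponding states $i_n$ converge uniformly to the state $i^*$ generated by $u^*$, so that $-i_n(T)\to -i^*(T)$ and the terminal cost also passes to the limit. This I would handle by combining the equicontinuity from Step~1 with Arzel\`a--Ascoli to extract a uniformly convergent subsequence, and then a Gronwall argument, exploiting Lipschitz dependence of the right-hand side on $i$ on the compact range $[0,1]$ and its affine dependence on $u$, to identify the limit as the trajectory generated by $u^*$. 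Once this is in place, $J(u^*)\le\liminf_n J(u_n)=\inf_{u\in U} J(u)$, giving the desired optimal pair $(u^*,i^*)$.
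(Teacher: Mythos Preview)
Your proposal is correct and follows essentially the same route as the paper: both invoke the Fleming--Rishel/Cesari existence theorem and verify the standard checklist (nonempty admissible set, closed/convex control set with bounded values, right-hand side affine in $u$ with linear growth, coercive convex running cost $bu^2\ge C_1|u|^{C_2}$ with $C_2>1$, and bounded terminal cost). The only difference is one of presentation: the paper simply cites the theorem and ticks off the hypotheses, whereas you additionally unpack the underlying minimizing-sequence/weak-compactness argument; this is helpful exposition but not a different strategy.
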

\begin{proof}
The theorem can be proved by application of the Cesari Theorem \cite[pg 68]{fleming1975deterministic}. Let the right hand side (RHS) of (\ref{eq:SIS_constraint}) be denoted by $f(u(t),i(t))$. The following requirements of the Theorem are met: $f(u(t),i(t))$ satisfies the required bound $|f(u(t),i(t))| \leq C_0(1+|i(t)|+|u(t)|)$ (with $C_0=\textnormal{sup}|\beta(t)+\gamma+u(t)|$; note that $\beta(t)$ is bounded and $0\leq u(t) \leq u_{max}$). The set $U$ and the set of solutions to initial value problem (\ref{eq:SIS_constraint}) and (\ref{eq:SIS_init_cond}) are non empty (due to Lipschitz continuity of $f(u(t),i(t))$ \cite[pg. 185]{birkhoff1989ordinary}). The control signal takes values in a closed set $[0, u_{max}]$. The cost due to the terminal state in the cost functional (\ref{eq:SIS_prob_reduced}) takes values in a compact interval $[0,1]$. The function $f(u(t),i(t))$ is linear in $u(t)$. In the cost functional (\ref{eq:SIS_prob_reduced}), the integrand, $bu^2(t)\geq C_1|u(t)|^{C_2}-C_3$. It is required that $C_1>0,~C_2>1$ which is satisfied if we choose $C_1=b,~C_2=1.5,~C_3=0$.
\end{proof}

\subsection{Solution to the SIS Optimal Control Problem}
\label{sec:SIS_pontryagin_numerical}
We use Pontryagin's Maximum principle \citep{kamien1991dynamic} to solve the optimal control problem (\ref{eq:SIS_prob_reduced}). Through this technique, we get a system of ordinary differential equations (ODEs) in terms of state and adjoint variables (with initial and boundary conditions, respectively) which are satisfied at the optimum. The system of ODEs can be solved numerically using boundary value ODE solvers. Let $\lambda(t)$ denote the adjoint variable. At time $t$, let $u^*(t)$ denote the optimum control and, $i^*(t)$ and $\lambda^*(t)$ the state and adjoint variables evaluated at the optimum.
\\ \emph{Hamiltonian:} The objective function in (\ref{eq:SIS_prob_reduced}) has been multiplied by $-1$ to convert the problem to a maximization problem.
\begin{align}
H(i(t),u(t),\lambda (t), t) = -b u^2(t) + \lambda (t)\Big[ -\beta(t) i^2(t) + (\beta(t) - \gamma - u(t))i(t) + u(t)\Big]. \nonumber
\end{align}
\emph{Adjoint equation:} $\dot{\lambda} ^*(t)$ is $- \frac{\partial}{\partial i(t)}H(i(t),u(t),\lambda (t), t)$ evaluated at the optimum.
\begin{eqnarray} \label{eqn:SIS-lambda(t)}
\dot{\lambda} ^*(t) &=& - \left.\frac{\partial}{\partial i(t)}H(i(t),u(t),\lambda (t), t) \right|_{\begin{smallmatrix}
i(t)=i^*(t), u(t)=u^*(t),\\
\lambda(t) = \lambda^*(t)
\end{smallmatrix}} \nonumber \\
&=& 2 \beta(t) i^*(t) \lambda ^*(t) - \big(\beta(t) - \gamma - u^*(t)\big) \lambda^*(t).
\label{eq:SIS_costate}
\end{eqnarray}
\emph{Hamiltonian Maximizing Condition:} At the interior points
\begin{eqnarray}
\left.\frac{\partial}{\partial u(t)} H(i(t),u(t),\lambda (t), t)\right|_{i(t)=i^*(t), u(t)=u^*(t), \lambda(t) = \lambda^*(t)} \nonumber \\
= -2 b u^*(t) - \lambda ^*(t)i^*(t) + \lambda^*(t) = 0. \nonumber
\end{eqnarray}
Hence the Hamiltonian maximizing condition leads to
\begin{equation}
 u^*(t) =
\begin{cases}
0 & \text{if } \frac{\lambda^*(t)(1-i^*(t))}{2b}<0,
\\
\frac{\lambda^*(t)(1-i^*(t))}{2b} & \text{if } 0\leq \frac{\lambda^*(t)(1-i^*(t))}{2b} \leq u_{max},
\\
u_{max} & \text{if } \frac{\lambda^*(t)(1-i^*(t))}{2b}>u_{max},
\end{cases}
~~~~~~\Rightarrow~~~~~~
u^*(t) = \text{min}\left\{\text{max}\left\{\frac{\lambda^*(t)(1-i^*(t))}{2b},0\right\},u_{max}\right\}.
\label{eq:SIS_u}
\end{equation}
\emph{Transversality condition:} From the transversality condition we get
\begin{eqnarray}
\lambda ^*(T) = 1. \label{eq:SIS_transversality_cond}
\end{eqnarray}

Substituting (\ref{eq:SIS_u}) in (\ref{eq:SIS_constraint}) and (\ref{eq:SIS_costate}) \big(and using the initial condition (\ref{eq:SIS_init_cond}) and boundary condition (\ref{eq:SIS_transversality_cond})\big), we get a system of ODEs which can be solved using standard boundary value problem ODE solving techniques. We have implemented the shooting method \cite{kutz2005practical} in MATLAB to solve the boundary value problem. Equations (\ref{eq:SIS_constraint}) and (\ref{eq:SIS_costate}) are solved using MATLAB's initial value problem solver \texttt{ode45()} with Equation (\ref{eq:SIS_costate}) initialized arbitrarily. Naturally, the solution will not satisfy the required boundary condition (\ref{eq:SIS_transversality_cond}); hence the estimation of the initial condition of Equation (\ref{eq:SIS_costate}) is improved using the optimization routine \texttt{fminunc()} until the boundary condition (\ref{eq:SIS_transversality_cond}) is met with desired accuracy. Another option is to use forward-backward sweep method explained in \cite{asano2008optimal,gaff2009optimal}.

\subsection{Uniqueness of the Solution to the SIS Optimal Control Problem}
\begin{thm}
\label{theorem:SIS_uniqueness}
For a sufficiently small campaign deadline, $T$, the state and adjoint trajectories at the optimum and the optimal control to problem (\ref{eq:SIS_prob_reduced}) are unique.
\end{thm}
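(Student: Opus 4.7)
The plan is to follow the standard technique for uniqueness of optimal controls in epidemic-type problems (cf.\ Lenhart--Workman style arguments): assume two optimal triples $(i_1^*,\lambda_1^*,u_1^*)$ and $(i_2^*,\lambda_2^*,u_2^*)$ both satisfy the necessary conditions from Section \ref{sec:SIS_pontryagin_numerical}, show their differences satisfy a Gronwall-type integral inequality whose leading coefficient can be made less than one by shrinking $T$, and conclude equality.

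First I would exploit a priori bounds. By construction $i_j^*(t)\in[0,1]$, and since $\lambda_j^*$ solves the linear ODE (\ref{eq:SIS_costate}) on $[0,T]$ with terminal value $\lambda_j^*(T)=1$, with coefficients bounded uniformly in terms of $\sup|\beta|$, $\gamma$, and $u_{max}$, there exists $M=M(T,\beta,\gamma,u_{max})$ such that $|\lambda_j^*(t)|\le M$ for $j=1,2$. Next, I would introduce the exponential rescaling $i_j^*(t)=e^{\mu t}\,x_j(t)$ and $\lambda_j^*(t)=e^{-\mu t}\,\psi_j(t)$ with $\mu>0$ to be chosen; substituting into (\ref{eq:SIS_constraint})--(\ref{eq:SIS_costate}) yields ODEs in $x_j,\psi_j$ in which the worst quadratic cross terms (namely $\beta i^{*2}$ in the state equation and $\beta i^*\lambda^*$ in the adjoint) can be partly absorbed into the $\pm\mu$ terms.

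Setting $X=x_1-x_2$, $\Psi=\psi_1-\psi_2$, I would subtract the two systems, multiply the $X$-equation by $X$ and the $\Psi$-equation by $\Psi$, and integrate over $[0,T]$, using the boundary data $X(0)=0$ and $\Psi(T)=0$. The characterization (\ref{eq:SIS_u}) writes $u_j^*$ as the Euclidean projection of $\lambda_j^*(1-i_j^*)/(2b)$ onto $[0,u_{max}]$; since projection is nonexpansive, one obtains
\[
|u_1^*(t)-u_2^*(t)|\le \frac{1}{2b}\bigl(M\,|i_1^*(t)-i_2^*(t)|+|\lambda_1^*(t)-\lambda_2^*(t)|\bigr),
\]
which after the exponential rescaling becomes a Lipschitz bound in $|X|$ and $|\Psi|$ with constants depending on $M$, $\mu$, $b$, and $T$. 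Combining all contributions via Cauchy--Schwarz and the elementary inequality $2ab\le a^2+b^2$, I expect an estimate of the form
\[
\Bigl(\tfrac{1}{2}-C_1(\mu)\,T\,e^{C_2\,\mu T}\Bigr)\int_0^T\!\!\bigl(X^2+\Psi^2\bigr)\,dt\;\le\;0,
\]
where $C_1,C_2$ depend only on $\sup|\beta|$, $\gamma$, $u_{max}$, $b$, and $M$. Fixing $\mu$ first (large enough to dominate the quadratic nonlinearities) and then choosing $T$ small enough that the prefactor is strictly positive forces $X\equiv 0$ and $\Psi\equiv 0$, hence $i_1^*\equiv i_2^*$ and $\lambda_1^*\equiv\lambda_2^*$; uniqueness of $u^*$ then follows immediately from (\ref{eq:SIS_u}).

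The main obstacle will be the quadratic cross terms produced by the nonlinear state dynamics and the product $\lambda^* i^*$ in the adjoint: these terms do not cancel by algebra alone and must be controlled by the exponential weight $e^{\pm\mu t}$ together with the uniform bounds on $i^*$ and $\lambda^*$. Getting the constants $C_1(\mu)$ and $C_2$ explicit enough to guarantee a threshold $T^*>0$ requires careful bookkeeping, but it is essentially algebraic once the rescaling and a priori bounds are in place.
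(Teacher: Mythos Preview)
Your proposal is correct and follows essentially the same argument as the paper: the exponential rescaling $i^*=e^{\mu t}x,\ \lambda^*=e^{-\mu t}\psi$, the projection Lipschitz bound on $u^*$ via (\ref{eq:SIS_u}), subtraction of the two systems followed by multiplication by the differences and integration, and finally positivity of the resulting coefficient for small $T$ are exactly the steps carried out in Appendix~\ref{app:uniqueness_SIS} (which in turn follows \cite{fister1998optimizing}). The only cosmetic difference is that the paper writes the final coefficient in the form $(a-C_8e^{aT}-C_9)$ and extracts the threshold as $\sup_a\tfrac{1}{a}\log\!\big((a-C_9)/C_8\big)$, whereas you fix $\mu$ first and then shrink $T$; these are equivalent ways of arriving at the same conclusion.
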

\begin{proof}
The proof technique is same as in \cite{fister1998optimizing}, details are in \ref{app:uniqueness_SIS}.
\end{proof}

\section{Analysis of the Controlled SIR Epidemic}
\label{sec:SIR_model_analysis}

After removing the redundant information, problem (\ref{eq:prob_SIR_word_of_mouth}) can be rewritten in terms of two state variables and two controls as follows (sets $U_1$ and $U_2$ are according to Definition \ref{def:SIR_U1_U2}):
\begin{eqnarray}
\underset{u_1\in U_1,u_2\in U_2}{\text{min }} J &=& -1 + s(T) + \int_{0}^{T} \hspace{-1em}\big(bu_1^2(t)+cu_2^2(t)\big)dt \label{eq:SIR_wom_prob_modified} \\
\text{subject to:  } \dot{s}(t) &=& -\big(\beta(t)+u_2(t)\big) s(t) \big(1-s(t)-r(t)\big) - u_1(t)s(t) \label{eq:SIR_modified_constraint1} \\
\dot{r}(t) &=& \gamma \big(1-s(t)-r(t)\big)  \label{eq:SIR_modified_constraint2} \\
0& \leq & s(t),r(t)\ \leq \ 1   \nonumber \\
s(0) & = & 1 - i_0, \ \ r(0) = 0. \label{eq:SIR_wom_init_cond}
\end{eqnarray}

\subsection{Existence of a Solution}

\begin{thm}
There exist an optimal control signals $u_1\in U_1,u_2\in U_2$ and corresponding solutions $s^*(t), r^*(t)$ to the initial value problem (\ref{eq:SIR_modified_constraint1}), (\ref{eq:SIR_modified_constraint2}) and (\ref{eq:SIR_wom_init_cond}) such that $(u_1,u_2)^T\in\underset{u_1\in U_1, u_2\in U_2}{\textnormal{argmin }}\{J(u_1,u_2)\}$ in problem (\ref{eq:SIR_wom_prob_modified}).
\end{thm}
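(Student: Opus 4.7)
The plan is to mirror the argument used for the SIS existence theorem and invoke Cesari's theorem from Fleming and Rishel. Write $\mathbf{x}=(s,r)^T$, $\mathbf{u}=(u_1,u_2)^T$, and let $\mathbf{f}(\mathbf{x},\mathbf{u},t)$ denote the vector of right-hand sides of (\ref{eq:SIR_modified_constraint1}) and (\ref{eq:SIR_modified_constraint2}). The hypotheses I need to verify are: (i) a linear growth bound $\|\mathbf{f}\|\le C_0(1+\|\mathbf{x}\|+\|\mathbf{u}\|)$; (ii) non-emptiness of the admissible class and existence of a corresponding trajectory for each admissible pair; (iii) the control set $[0,u_{1\max}]\times[0,u_{2\max}]$ is closed and convex; (iv) $\mathbf{f}$ is affine in $\mathbf{u}$ for each fixed $(\mathbf{x},t)$; (v) the integrand $L(\mathbf{u})=bu_1^2+cu_2^2$ satisfies a coercive bound $L\ge C_1\|\mathbf{u}\|^{C_2}-C_3$ for some $C_1>0$, $C_2>1$; and (vi) the terminal cost $-1+s(T)$ takes values in a compact interval.

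First, since $\beta(t)$ is bounded by some $\beta_{\max}$ and the controls are uniformly bounded by $u_{1\max},u_{2\max}$, the growth bound in (i) is immediate with $C_0$ chosen in terms of $\beta_{\max}+u_{1\max}+u_{2\max}+\gamma$. For (ii), the pair $u_1\equiv 0$, $u_2\equiv 0$ is admissible, and existence and uniqueness of the associated trajectory follow from Lipschitz continuity of $\mathbf{f}$ in $\mathbf{x}$, since its coefficients are smooth and bounded on the invariant set (established below). Item (iii) is clear. For (iv), with $(\mathbf{x},t)$ fixed, the map $(u_1,u_2)\mapsto\mathbf{f}$ is affine, hence linear in the sense required by Cesari. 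Item (v) holds with $C_1=\min\{b,c\}$, $C_2=1.5$, $C_3=0$, and (vi) holds since $s(T)\in[0,1]$ so the terminal cost lies in $[-1,0]$.

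The main obstacle, absent from the SIS case, is that the word-of-mouth control $u_2(t)$ enters multiplicatively with the state variables through the term $u_2(t)s(t)(1-s(t)-r(t))$, so the linear growth hypothesis would fail if the states were allowed to drift unboundedly. This is circumvented by the standard invariance argument showing that the simplex $\{(s,r):s\ge 0,\ r\ge 0,\ s+r\le 1\}$ is forward-invariant under (\ref{eq:SIR_modified_constraint1})--(\ref{eq:SIR_modified_constraint2}) for every admissible $(u_1,u_2)$: on the face $s=0$ one has $\dot{s}=0$; on $r=0$ one has $\dot{r}=\gamma(1-s)\ge 0$; and on $s+r=1$ one has $\dot{s}+\dot{r}=-u_1 s\le 0$. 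With the states confined to this compact simplex, the coefficient $s(1-s-r)$ is bounded, the linear growth bound and the affine-in-$\mathbf{u}$ property follow at once, and Cesari's theorem then delivers the desired optimal pair $(u_1^*,u_2^*)\in U_1\times U_2$ together with a corresponding trajectory $(s^*,r^*)$ attaining the minimum in (\ref{eq:SIR_wom_prob_modified}).
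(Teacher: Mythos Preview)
Your proposal is correct and follows exactly the approach the paper intends: application of Cesari's theorem from Fleming--Rishel, with the same checklist of hypotheses as in the SIS existence proof. The paper's own proof of this statement simply reads ``the details are omitted,'' so your write-up---including the forward-invariance argument for the simplex to handle the bilinear term $u_2\,s(1-s-r)$---supplies precisely those omitted details.
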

\begin{proof}
The theorem can be proved by application of the Cesari Theorem \cite[pg 68]{fleming1975deterministic}. The details are omitted.
\end{proof}

\subsection{Solution to the SIR Optimal Control Problem with Direct and Word-of-mouth Control}
Let $\lambda_s(t)$ and $\lambda_r(t)$ be the adjoint variables. At time $t$, let $u_1^*(t),~u_2^*(t)$ denote the optimum controls and, $s^*(t),~r^*(t)$ and $\lambda_s^*(t),~\lambda_r^*(t)$ the state and adjoint variables evaluated at the optimum. Using Pontryagin's Maximum Principle \citep{kamien1991dynamic} we get the following equations.
\\ \emph{Hamiltonian:} The objective function in (\ref{eq:SIR_wom_prob_modified}) has been multiplied by $-1$ to convert the problem to maximization problem.
\begin{align}
& H(s(t),r(t),u_1(t),u_2(t),\lambda_s (t),\lambda_r (t), t) \nonumber\\
=& -b u_1^2(t) -c u_2^2(t) + \lambda_s (t)\Big[ -\big(\beta(t)+u_2(t)\big) s(t) + \big(\beta(t)+u_2(t)\big) s^2(t) + \big(\beta(t)+u_2(t)\big) s(t)r(t) - u_1(t)s(t) \Big] \nonumber \\
&  + \lambda_r(t) \Big[\gamma - \gamma s(t) - \gamma r(t)\Big]. \nonumber
\end{align}
\emph{Adjoint Equations:}
\begin{eqnarray}
\hspace{-1em}\dot{\lambda}_s ^{*}(t) 
&=& \beta(t) \lambda_s ^{*}(t) - 2 \beta(t) \lambda_s ^{*}(t)s^{*}(t) - \beta(t) \lambda_s ^{*}(t)r^*(t) + \lambda_s ^{*}(t)u_1^*(t) + \lambda_s^*(t) u_2^*(t) - 2\lambda_s^*(t) u_2^*(t) s^*(t) \nonumber \\
& & - \lambda_s^*(t) u_2^*(t) r^*(t) + \gamma \lambda_r ^{*}(t) \label{eq:SIR_wom_lam_s} \\
\hspace{-1em}\dot{\lambda}_r ^{*}(t) 
&=& -\beta(t) \lambda_s ^{*}(t)s^*(t) + \lambda_s^*(t) u_2^*(t) s^*(t) + \gamma \lambda_r ^{*}(t) \label{eq:SIR_wom_lam_r}
\end{eqnarray}
\emph{Hamiltonian Maximizing Condition:} Derivative of the Hamiltonian evaluates to zero at interior points, 
hence the Hamiltonian maximizing condition leads to
\begin{equation}
u_1^*(t) = \text{min}\left\{\text{max}\left\{\frac{\lambda_s^*(t)s^*(t)}{-2b},0\right\},u_{1max}\right\}.
\label{eq:SIR_wom_u1}
\end{equation}
and,
\begin{equation}
u_2^*(t) = \text{min}\left\{\text{max}\left\{\frac{\lambda_s^*(t)s^*(t)\big(1-2s^*(t)-r^*(t)\big)}{-2c},0\right\},u_{2max}\right\}.
\label{eq:SIR_wom_u2}
\end{equation}
\emph{Transversality condition:} $\lambda_s ^*(T) = -1$ and $\lambda_r ^*(T) = 0$.

Substituting the values of $u_1^*(t)$ and $u_2^*(t)$ from (\ref{eq:SIR_wom_u1}) and (\ref{eq:SIR_wom_u2}) to (\ref{eq:SIR_modified_constraint1}), (\ref{eq:SIR_modified_constraint2}), (\ref{eq:SIR_wom_lam_s}) and (\ref{eq:SIR_wom_lam_r}) and solving the system of ODEs numerically using the technique described in Section \ref{sec:SIS_pontryagin_numerical}, we can compute the state and adjoint variables and hence the optimal control.

\subsection{Uniqueness of the Solution to the SIR Optimal Control Problem with Direct Recruitment and Word-of-mouth Control}
\begin{thm}
\label{theorem:SIR_wom_uniqueness}
For a sufficiently small campaign deadline, $T$, the state and adjoint trajectories at the optimum and the solution to the optimal control problem (\ref{eq:SIR_wom_prob_modified}) are unique.
\end{thm}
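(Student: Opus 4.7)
The plan mirrors the SIS argument of Theorem \ref{theorem:SIS_uniqueness} (which follows \cite{fister1998optimizing}), adapted to the coupled four-variable state/adjoint system obtained after substituting (\ref{eq:SIR_wom_u1})--(\ref{eq:SIR_wom_u2}) into (\ref{eq:SIR_modified_constraint1}), (\ref{eq:SIR_modified_constraint2}), (\ref{eq:SIR_wom_lam_s}) and (\ref{eq:SIR_wom_lam_r}). The existence theorem together with the clipped projection formulas for $u_1^*,u_2^*$ and the transversality conditions supply uniform sup-norm bounds on $s,r,\lambda_s,\lambda_r$ over $[0,T]$, which I will use throughout. Suppose $(s,r,\lambda_s,\lambda_r)$ and $(\bar s,\bar r,\bar\lambda_s,\bar\lambda_r)$ are two optimal tuples. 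First I would introduce exponentially rescaled variables $s=e^{\mu t}m$, $r=e^{\mu t}n$, $\lambda_s=e^{-\mu t}p$, $\lambda_r=e^{-\mu t}q$ (and analogously for the barred tuple), where $\mu>0$ is a constant to be chosen later. Rewriting the closed-loop system in $(m,n,p,q)$ preserves the polynomial structure of the right-hand sides but injects dissipative terms $-\mu m,-\mu n$ in the state equations and $+\mu p,+\mu q$ in the adjoint equations; the projection expressions in (\ref{eq:SIR_wom_u1})--(\ref{eq:SIR_wom_u2}) remain $1$-Lipschitz in $(m,n,p,q)$ with coefficients bounded by some $e^{C\mu T}$.

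Next I would subtract the rescaled equations for the two solutions, multiply the state-difference equations by $(m-\bar m)$ and $(n-\bar n)$, multiply the adjoint-difference equations by $(p-\bar p)$ and $(q-\bar q)$, and integrate over $[0,T]$. The initial conditions on $(m,n)$ and the terminal conditions on $(p,q)$ ensure that the boundary terms of the form $\tfrac12(m-\bar m)^2$ and $\tfrac12(p-\bar p)^2$ at the endpoints appear with favorable signs and can be dropped. Applying Young's inequality to each cross term coming from the expansion of products such as $s^2\lambda_s$, $sr\lambda_s$ and $s(1-2s-r)\lambda_s$, and summing the four estimates, I expect to reach an inequality of the form
\[
\mu \int_0^T \Delta(t)\, dt \;\leq\; \bigl(C_1 + C_2\,e^{3\mu T}\bigr)\int_0^T \Delta(t)\, dt,
\]
where $\Delta(t)=(m-\bar m)^2+(n-\bar n)^2+(p-\bar p)^2+(q-\bar q)^2$ and $C_1,C_2$ depend only on $\sup_t\beta(t),\gamma,u_{1max},u_{2max},b,c,i_0$ and the a priori bounds. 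Choosing $\mu$ large enough that $C_1<\mu$ and then $T$ small enough that $C_1+C_2 e^{3\mu T}<\mu$ forces $\Delta\equiv 0$, hence $s=\bar s$, $r=\bar r$, $\lambda_s=\bar\lambda_s$, $\lambda_r=\bar\lambda_r$; uniqueness of the optimal controls then follows directly from (\ref{eq:SIR_wom_u1})--(\ref{eq:SIR_wom_u2}).

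The main obstacle is the word-of-mouth control $u_2$. Because $u_2$ multiplies the bilinear term $s(1-s-r)$ in the state dynamics and also enters the adjoint equation (\ref{eq:SIR_wom_lam_s}) coupled to $s,r,\lambda_s$, the closed-loop right-hand sides contain cubic and quartic products of state and adjoint variables, and each such product must be split as ``one difference times uniformly bounded factors'' before Young's inequality is applied. Tracking how the resulting constants propagate into $C_1$ and $C_2$, and in particular how the factor $e^{3\mu T}$ (rather than a smaller power) arises from stacking three rescaled factors in the worst term, is the principal bookkeeping issue; once that is done, the Gronwall-type closing step is identical to the SIS case, and the full calculation can be relegated to an appendix analogous to \ref{app:uniqueness_SIS}.
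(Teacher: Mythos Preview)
Your proposal is correct and follows essentially the same route as the paper's proof in Appendix~\ref{app:uniqueness_SIR_wom}: both use the Fister--Lenhart--Li exponential rescaling $s=e^{at}x,\,r=e^{at}y,\,\lambda_s=e^{-at}p,\,\lambda_r=e^{-at}q$, subtract the two candidate systems, multiply by the respective differences, integrate, and absorb the polynomial cross terms (including those generated by the word-of-mouth control) via Young's inequality to obtain a coercivity estimate of the form $(a-D\,e^{kaT}-D')\int_0^T\Delta\le 0$. The only cosmetic discrepancy is that you track a factor $e^{3\mu T}$ from the cubic/quartic products while the paper writes the final bound with a single $e^{aT}$ (its intermediate estimates in fact produce $e^{2aT}$ terms that are silently absorbed into the constants), but this does not affect the argument.
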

\begin{proof}
The proof technique is same as in \cite{fister1998optimizing}, details are in \ref{app:uniqueness_SIR_wom}.
\end{proof}

\section{Results}
\label{sec:Results}

We divide this section into three parts. Section \ref{sec:results-constant_spreading_rate} studies the control signal and corresponding state evolution for constant spreading rate and Section \ref{sec:results-variable_spreading_rate} for variable spreading rate. The tree in Fig. \ref{fig:sec6_tree} shows how the results are organized in these two subsections. In Section \ref{sec:results-J_vs_parameters} we study the role played by various parameters ($\beta, \gamma, T, b, c$) on the cost functional $J$ for both SIS and SIR models.

\begin{figure}[t!]
\centering
\includegraphics[width=100mm]{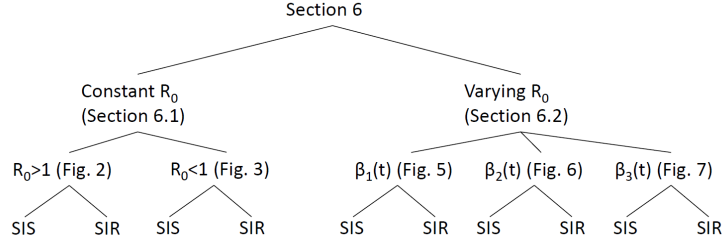}
\caption{Organization of plots related to the shapes of the control signals and state evolutions in Sections \ref{sec:results-constant_spreading_rate} and \ref{sec:results-variable_spreading_rate}. In addition, plots in Section \ref{sec:results-J_vs_parameters} show variation of the cost functional $J$ with respect to various parameters in the SIS and SIR models.}
\label{fig:sec6_tree}
\end{figure}

\begin{figure}[ht!]
\subfloat[SIS epidemic, $\beta=1$ ($R_0=10$) and $\beta=2$ ($R_0=20$) Parameter values: $\gamma=0.1, T=5, b=15, u_{max}=0.06, i_0=0.01$. \label{fig:control_state_sis_constantR0_beta1}]{
\includegraphics[width=65mm]{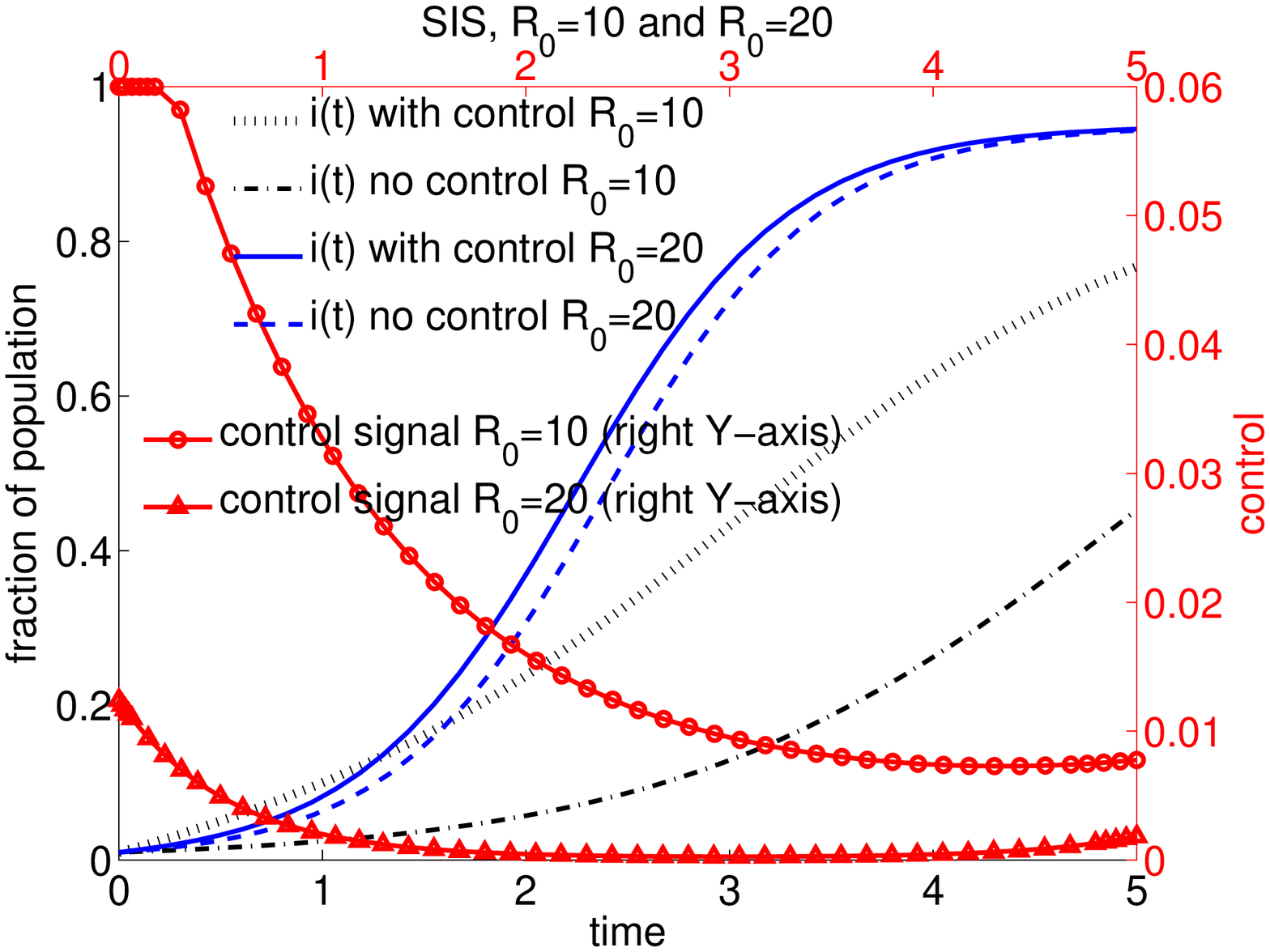} }
\hfill
\subfloat[SIR epidemic, $\beta=1$ or $R_0=10$ Parameter values: $\gamma=0.1, T=5, b=15, c=1, u_{1max}=0.06, u_{2max}=0.3, s_0=0.99, i_0=0.01$. \label{fig:control_state_sir_wom_constantR0_beta1}]{
\includegraphics[width=65mm]{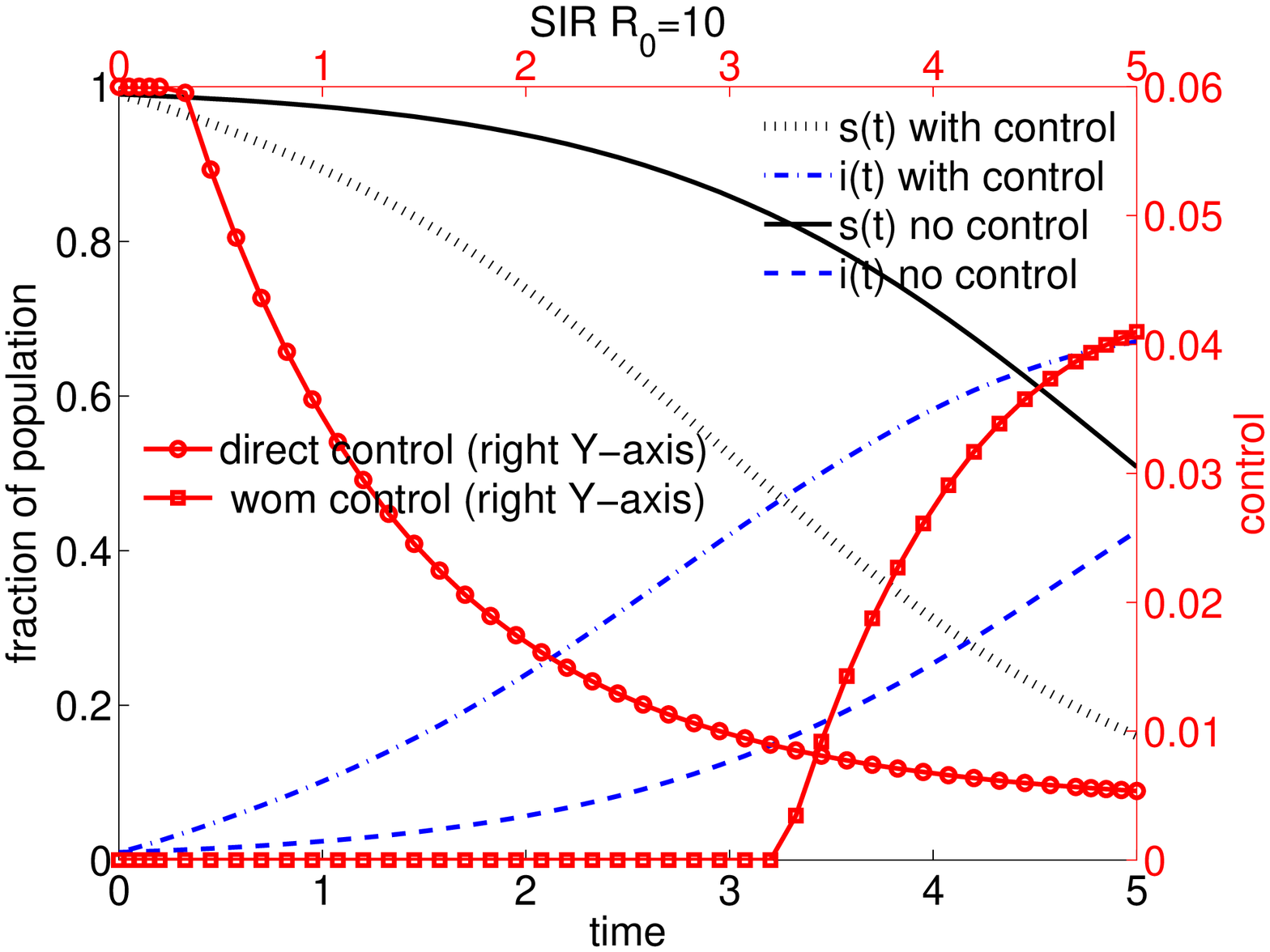} }
\caption{Optimal control, state evolutions with control and state evolutions without control for the SIS epidemic (model in Section \ref{sec:SIS_sys_model}) and the SIR epidemic (model in Section \ref{sec:SIR_sys_model}). \emph{Note that state variables are plotted with respect to the left Y-axis and control signals are plotted with respect to the right Y-axis.}}
\label{fig:control_state_sis_constantR0}
\end{figure}

\begin{figure}[ht!]
\subfloat[SIS epidemic, $\beta=.03$ or $R_0=0.3$ Parameter values: $\gamma=0.1, T=5, b=15, u_{max}=0.06, i_0=0.01$ \label{fig:control_state_sis_constantR0_beta03}]{
\includegraphics[width=65mm]{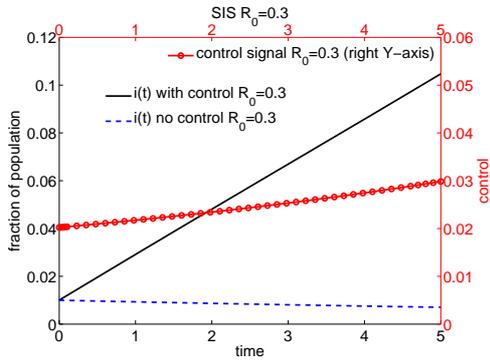} }
\hfill
\subfloat[SIR epidemic, $\beta=.03$ or $R_0=0.3$ Parameter values: $\gamma=0.1, T=5, b=15, c=1, u_{1max}=0.06, u_{3max}=0.3, s_0=0.99, i_0=.01$ \label{fig:control_state_sir_wom_constantR0_beta03}]{
\includegraphics[width=65mm]{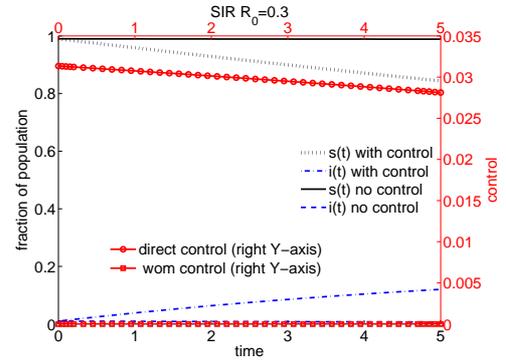} }
\caption{Optimal control, state evolution with control and state evolution without control for the SIS epidemic (model in Section \ref{sec:SIS_sys_model}) and the SIR epidemic (model in Section \ref{sec:SIR_sys_model}). \emph{Note that state variables are plotted with respect to the left Y-axis and control signals are plotted with respect to the right Y-axis.}}
\label{fig:control_state_sir_wom_constantR0}
\end{figure}

\subsection{Constant Effective Spreading Rate Over Time}
\label{sec:results-constant_spreading_rate}

First we consider the case when the effective spreading rate is constant over time. Thus, $\beta(t)=\beta,~\forall t\in[0,T].$ The basic reproductive number, $R_0$, for an epidemic is defined as the expected number of secondary infections caused by an infected node in the early stages of epidemic outbreak. For the (uncontrolled) SIS and SIR epidemic considered in this paper, $R_0=\beta/\gamma$, the ratio of effective spreading rate to the recovery rate \citep{barrat2008dynamical}. Give a campaign deadline $T$, basic reproductive number $R_0$ captures how viral the information epidemic is. Qualitatively, increasing $\beta$ or decreasing $\gamma$, while holding the other constant, is expected to have same result. If $R_0>1$ for the uncontrolled system, the epidemic, on the average will become endemic and for $R_0<1$, the epidemic dies out with probability $1$ \cite{barrat2008dynamical}. We discuss these cases separately:

\begin{enumerate}[(i)]
\item With reference to Figs. \ref{fig:control_state_sis_constantR0_beta1} (for SIS) and \ref{fig:control_state_sir_wom_constantR0_beta1} (for SIR), where $R_0>1$, direct control signal is strong when the targeted population (susceptibles) are in abundance (at beginning of the campaign period) and vice versa. Early infection increases the extent of information spreading as the system has a tendency to sustain the population in the infected state (because infection is faster than recovery). Also, the word-of-mouth control in Fig. \ref{fig:control_state_sir_wom_constantR0_beta1} switches on from zero when populations of both susceptible and infected individuals reach significant levels. Providing word-of-mouth incentive is effective only when there are substantial number of infected nodes as well as enough number of susceptibles to convince.

\item When $R_0<1$ for the uncontrolled system, the uncontrolled information epidemic dies out. The control and state evolutions for this case are shown in Figs. \ref{fig:control_state_sis_constantR0_beta03} (for SIS) and \ref{fig:control_state_sir_wom_constantR0_beta03} (for SIR).  We find the direct control signal to be less variable over time. In fact, in the SIS case, it increases with time. Fast recovery (compared to infection) of the nodes makes a strong control at the beginning stages of the epidemic ineffective. Also, notice that in the SIR case (Fig. \ref{fig:control_state_sir_wom_constantR0_beta03}), the optimal strategy advocates not using word-of-mouth control throughout the campaign duration.

\item We make an additional observation with reference to the control signals and the state evolution curves plotted for $R_0=10$ and $R_0=20$ in Fig. \ref{fig:control_state_sis_constantR0_beta1}. Given a campaign deadline $T$, as $R_0=\beta/\gamma$ increases, (a) the control effort (measured by area under the control curve) decreases and (b) control signal has limited effect on system evolution. Thus, campaigns which are less viral will benefit more from the application of optimal control than the campaigns which are more viral. Such an observation has implications on marketing strategies for new products launched by a reputed company compared to a newbie in the market, or publicity of a movie by a famous director compared to a newcomer. Similar observations were made for direct and word-of-mouth controls in case of the SIR model, but the curves corresponding to $R_0=20$ are omitted for brevity.
\end{enumerate}

\begin{figure}[ht!]
\centering
\includegraphics[width=65mm]{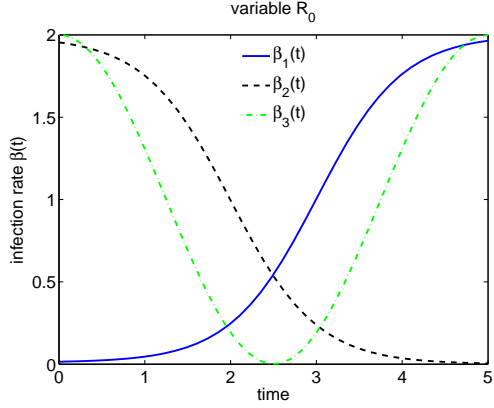}
\caption{Time varying effective spreading rate, $\beta_1(t), \beta_2(t)$ and $\beta_3(t)$ defined in equations (\ref{eq:beta1t}), (\ref{eq:beta2t}) and (\ref{eq:beta3t}) respectively. Parameter values: $\beta_m=.01, \beta_M=2, T=5, a_1 = 2, c_1 = 3; a_2=2, c_2=2, c_m=1, c_a=1$ and $t\in[0,5]$.}
\label{fig:beta_sis_sir_variableR0}
\end{figure}

\begin{figure}[ht!]
\subfloat[SIS epidemic. Parameter values: $\gamma=0.1, T=5, b=15, u_{max}=0.06, i_0=0.01$. \label{fig:control_state_sis_variableR0_beta1}]{
\includegraphics[width=65mm]{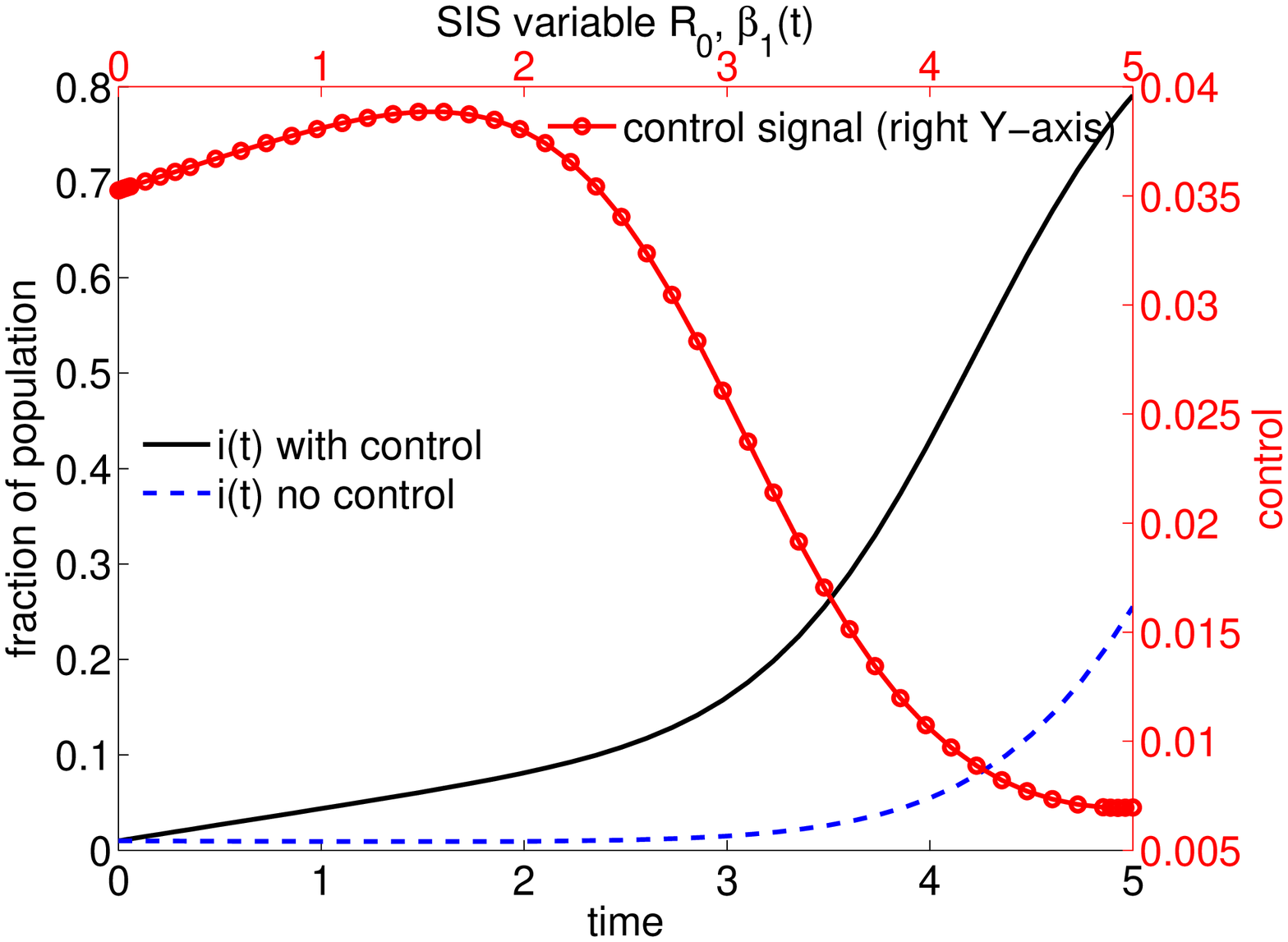} }
\hfill
\subfloat[SIR epidemic. Parameter values: $\gamma=0.1, T=5, b=15, c=1, u_{1max}=0.06, u_{2max}=0.3, s_0=0.99, i_0=0.01$. \label{fig:control_state_sir_variableR0_beta1}]{
\includegraphics[width=65mm]{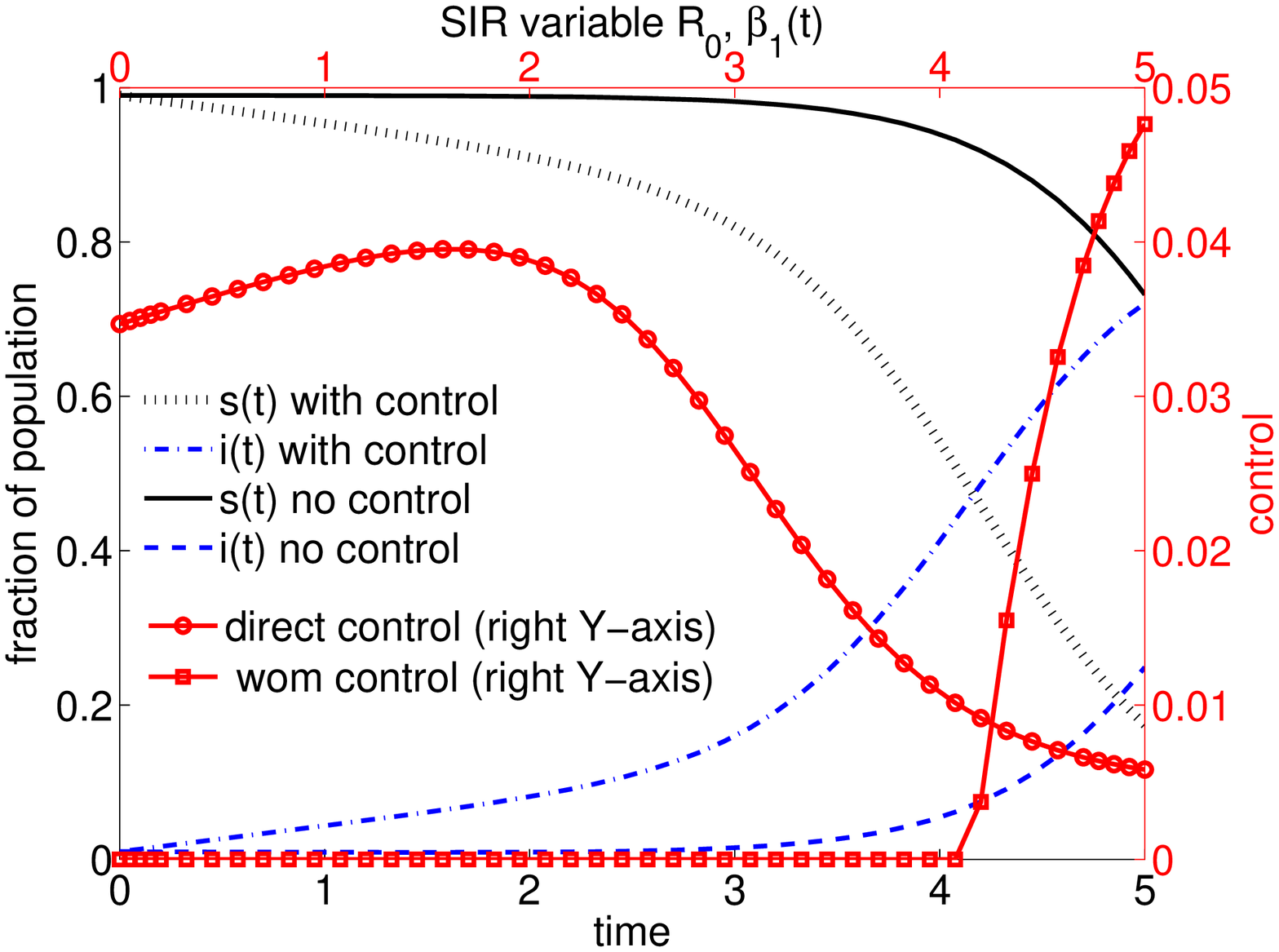} }
\caption{Optimal control, state evolution with control and state evolution without control for the SIS and SIR epidemic, for time-varying spreading rate $\beta_1(t)$. \emph{Note that state variables are plotted with respect to the left Y-axis and control signals are plotted with respect to the right Y-axis.}}
\label{fig:control_state_variableR0_beta1}
\end{figure}

\begin{figure}[ht!]
\subfloat[SIS epidemic. Parameter values: $\gamma=0.1, T=5, b=15, u_{max}=0.06, i_0=0.01$. \label{fig:control_state_sis_variableR0_beta2}]{
\includegraphics[width=70mm]{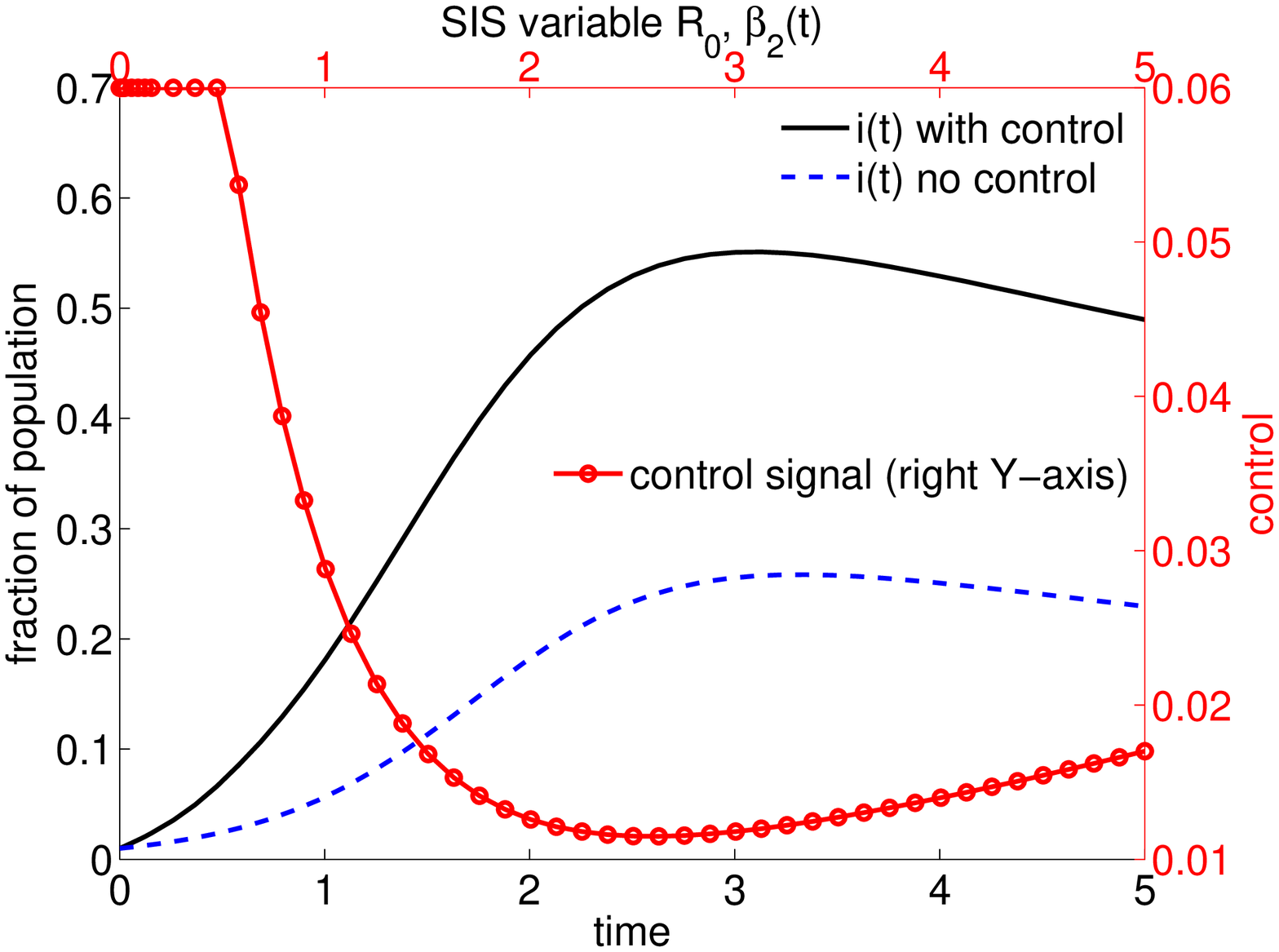} }
\hfill
\subfloat[SIR epidemic. Parameter values: $\gamma=0.1, T=5, b=15, c=1, u_{1max}=0.06, u_{2max}=0.3, s_0=0.99, i_0=0.01$. \label{fig:control_state_sir_variableR0_beta2}]{
\includegraphics[width=70mm]{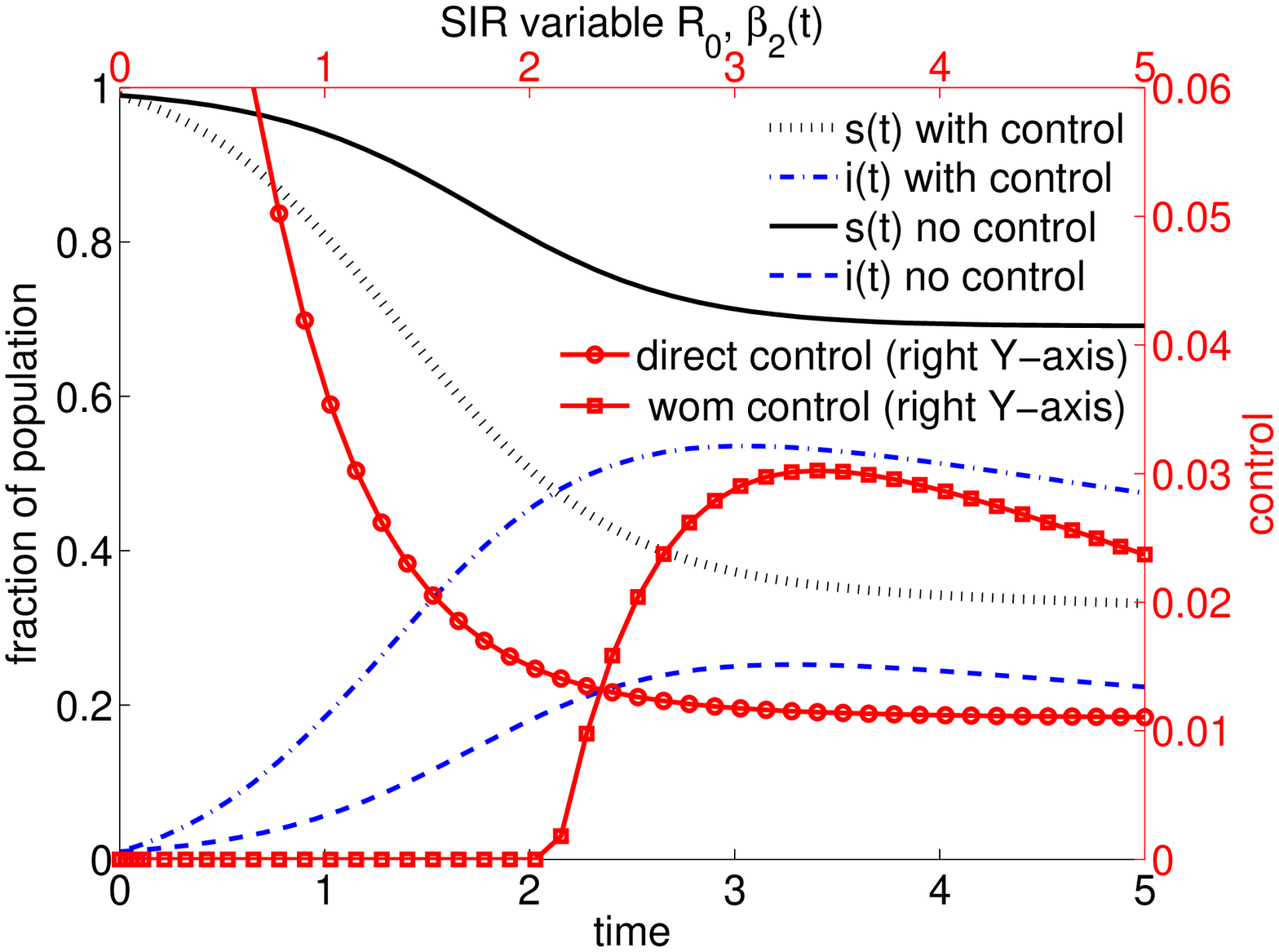} }
\caption{Optimal control, state evolution with control and state evolution without control for the SIS and SIR epidemics for time-varying spreading rate $\beta_2(t)$. \emph{Note that state variables are plotted with respect to the left Y-axis and control signals are plotted with respect to the right Y-axis.}}
\label{fig:control_state_variableR0_beta2}
\end{figure}

\begin{figure}[ht!]
\subfloat[SIS epidemic. Parameter values: $\gamma=0.1, T=5, b=15, u_{max}=0.06, i_0=0.01$. \label{fig:control_state_sis_variableR0_beta3}]{
\includegraphics[width=70mm]{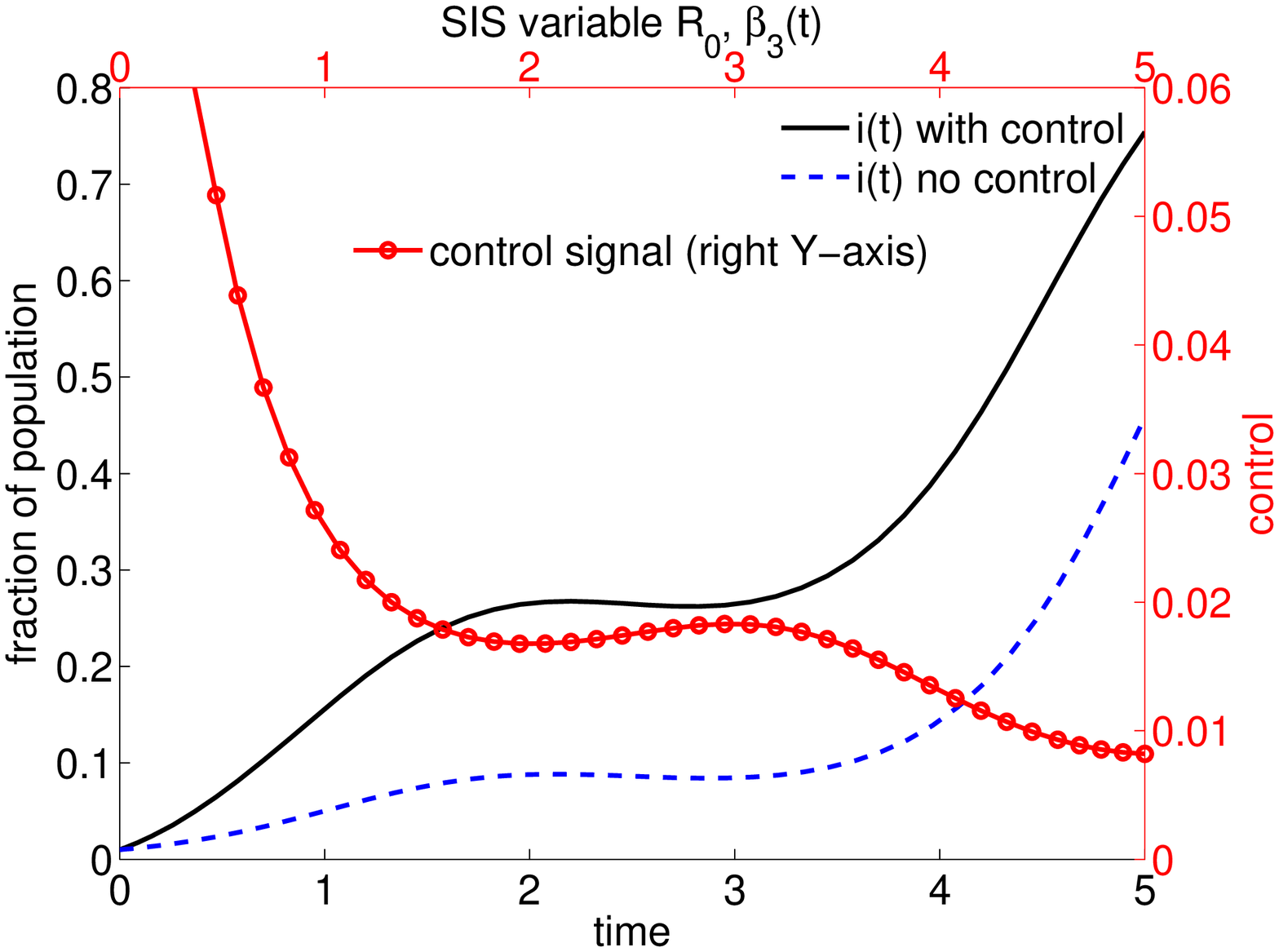} }
\hfill
\subfloat[SIR epidemic. Parameter values: $\gamma=0.1, T=5, b=15, c=1, u_{1max}=0.06, u_{2max}=0.3, s_0=0.99, i_0=0.01$. \label{fig:control_state_sir_variableR0_beta3}]{
\includegraphics[width=70mm]{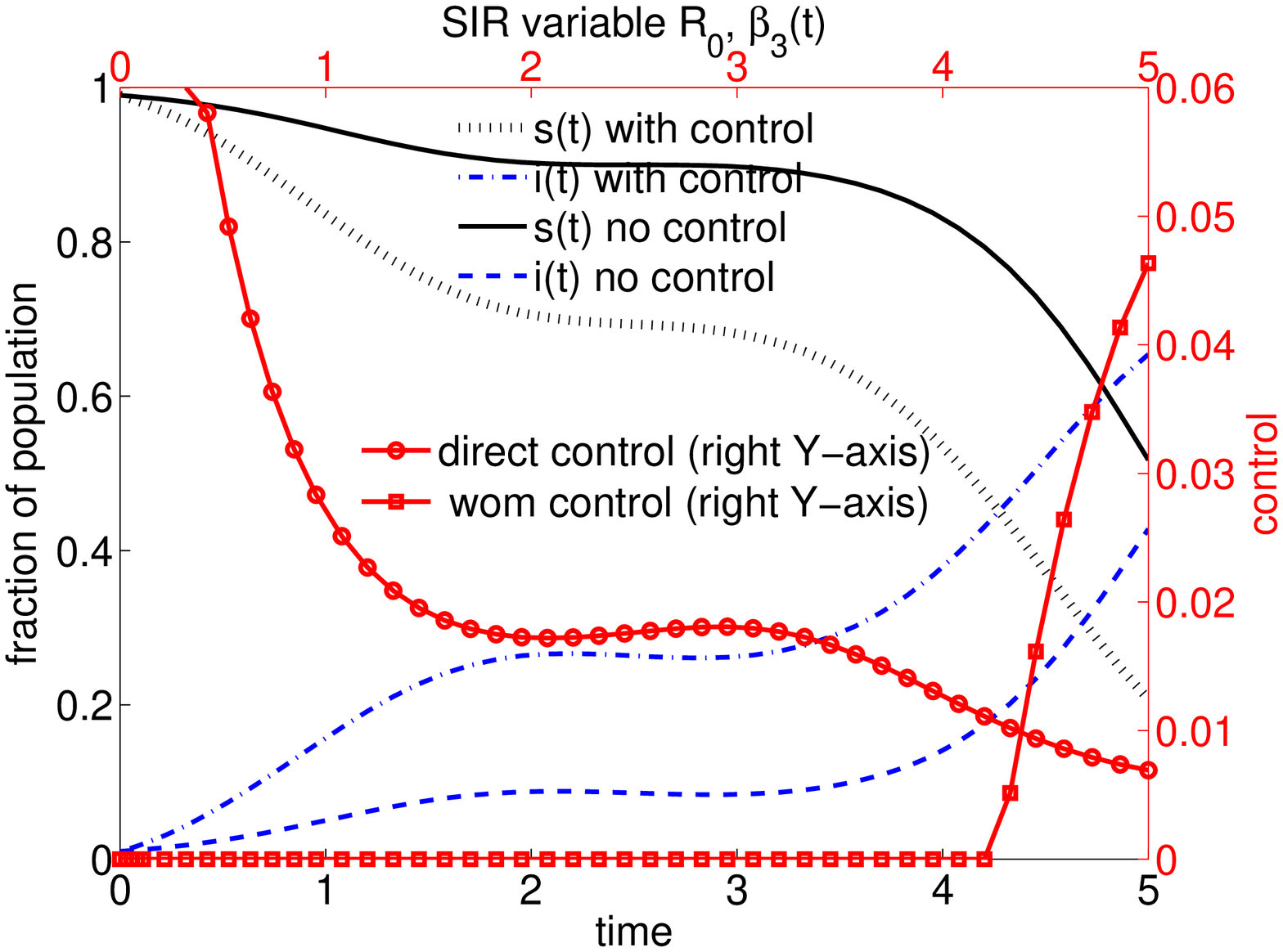} }
\caption{Optimal control, state evolution with control and state evolution without control for the SIS and SIR epidemics for time-varying spreading rate $\beta_3(t)$. \emph{Note that state variables are plotted with respect to the left Y-axis and control signals are plotted with respect to the right Y-axis.}}
\label{fig:control_state_variableR0_beta3}
\end{figure}

\subsection{Variable Effective Spreading Rate Over Time}
\label{sec:results-variable_spreading_rate}

To model the varying interest of a population in spreading the information during the campaign period, we consider three different functions $\beta_1(t), \beta_2(t)$ and $\beta_3(t)$. We model the cases of increasing, decreasing and fluctuating interests as we approach the deadline through these functions. The functions are increasing sigmoid, decreasing sigmoid and cosine (plotted in Fig. \ref{fig:beta_sis_sir_variableR0}) and are defined as:
\begin{eqnarray}
\beta_1(t) & = & \beta_m + \left( \frac{\beta_M-\beta_m}{1+e^{-a_1(t-c_1)}} \right), \label{eq:beta1t} \\
\beta_2(t) & = & (\beta_M - \beta_m) \left(1 - \frac{1}{1+e^{-a_2(t-c_2)}} \right), \label{eq:beta2t} \\
\beta_3(t) & = & c_m + c_a\cos(2\pi t/T), \label{eq:beta3t}
\end{eqnarray}
where the values of the parameters used are: $\beta_m=.01, \beta_M=2, T=5, a_1=2, c_1=3, a_2=2, c_2=2, c_m=1, c_a=1$ and $t\in[0,5]$. Wherever $\beta_i(t),~i=1,2,3$ are used, the recovery rate is set to $\gamma=0.1$. The increasing effective spreading rate, $\beta_1(t)$ may represent the increasing interest of people to talk about election candidates as we approach the polling date. The decreasing effective spreading rate, $\beta_2(t)$ may represent gradual loss of interest of people in talking about some newly launched product (\emph{e.g.} a computer game) after its release. Fluctuating effective spreading rate $\beta_3(t)$ may represent changes in demand of a product/service with time (\emph{e.g.}, movie tickets for weekend shows may have higher demand than tickets for weekday shows). Depending on the application, other profiles for $\beta(t)$ are possible.

The controls and state evolutions for the SIS and SIR models discussed in this paper for the time-varying effective spreading rates defined in Equations (\ref{eq:beta1t}), (\ref{eq:beta2t}) and (\ref{eq:beta3t}) are plotted in Figs. \ref{fig:control_state_variableR0_beta1}--\ref{fig:control_state_variableR0_beta3}. The shape of the optimal control may be different from the case when the effective spreading rate $\beta$ is a constant over time. This shows the need to determine the interest level of the population (and hence $\beta(t)$) before deciding on the optimal control strategy. The figures also show the effectiveness of the optimal control strategy over the case when no control is used, in increasing the number of infected nodes in the SIS model, and number of infected and recovered nodes in the SIR model. Thus, optimal campaigning is beneficial in real world scenarios where the effective information spreading rate may be variable.

The recovery rate may also be a time dependent quantity in real world applications. The framework developed in this paper can be easily modified to include both time dependent $\beta(t)$ and $\gamma(t)$.

\begin{figure}[ht!]
\subfloat[Different control signals for SIS model described in Section \ref{sec:SIS_sys_model}. \label{fig:how_different_controls_and_states_look_control}]{
\includegraphics[width=65mm]{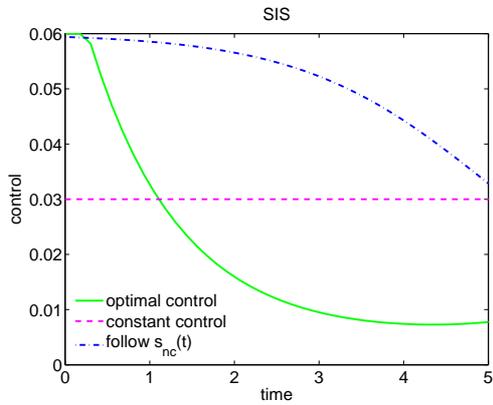} }
\hfill
\subfloat[Evolution of the proportion of infected individuals for different control strategies. \label{fig:how_different_controls_and_states_look_states}]{
\includegraphics[width=65mm]{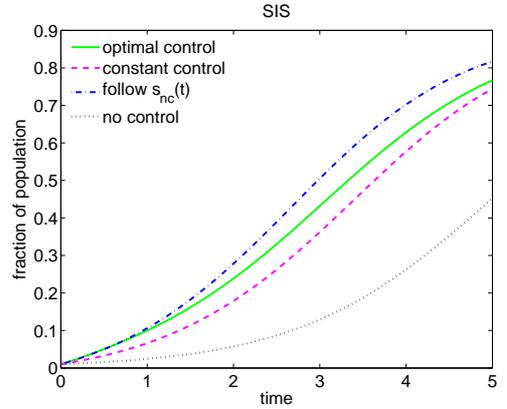} }
\caption{Shapes of different control signals and the corresponding state evolutions for the SIS model. Parameter values: $\beta = 1,~ \gamma=0.1,~ (R_0=10),~ T=5,~ b=15,~ u_{max}=0.06,~ i_0=0.1$. Note: both of the Figures contribute to the objective function which is optimized, see Figs. \ref{fig:J_vs_beta_sis}--\ref{fig:J_vs_b_sis} for objective function values for different strategies with varying parameter values.}
\label{fig:how_different_controls_and_states_look}
\end{figure}

\begin{figure}[ht!]
\subfloat[SIS epidemic. Parameter values: $\gamma=0.1, T=5, b=15, i_0=0.01, u_{max}=0.06$. \label{fig:J_vs_beta_sis}]{
\includegraphics[width=65mm]{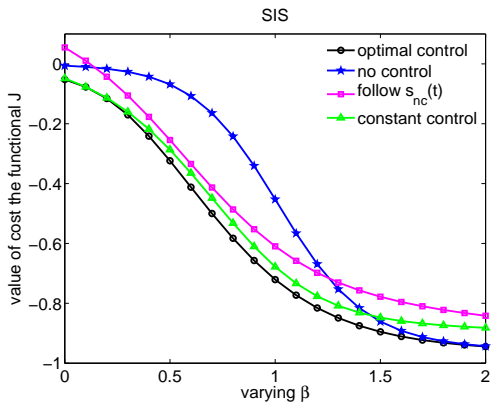} }
\hfill
\subfloat[SIR epidemic. Parameter values: $\gamma=0.1, T=5, b=15, c=1, u_{1max}=0.06, u_{2max}=0.3, s_0=0.99, i_0=0.01$. \label{fig:J_vs_beta_sir}]{
\includegraphics[width=65mm]{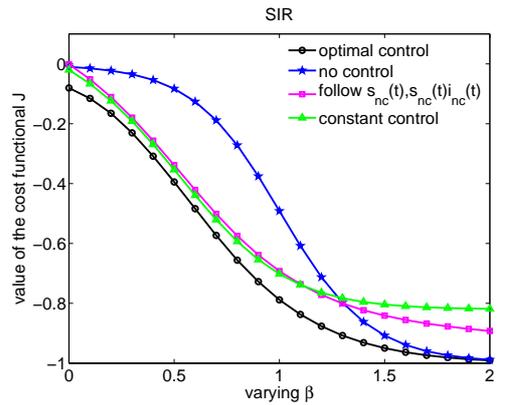} }
\caption{Objective functional $J$ vs spreading rate $\beta$ for different control strategies.}
\label{fig:J_vs_beta}
\end{figure}

\begin{figure}[ht!]
\subfloat[SIS epidemic. Parameter values: $\beta=1, T=5, b=15, i_0=0.01, u_{max}=0.06$. \label{fig:J_vs_gamma_sis}]{
\includegraphics[width=65mm]{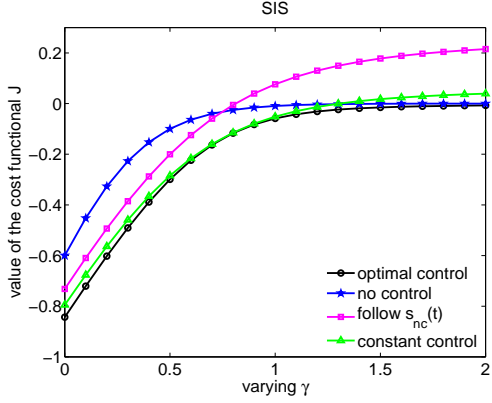} }
\hfill
\subfloat[SIR epidemic. Parameter values: $\beta=1, T=5, b=15, c=1, u_{1max}=0.06, u_{2max}=0.3, s_0=0.99, i_0=0.01$. \label{fig:J_vs_gamma_sir}]{
\includegraphics[width=65mm]{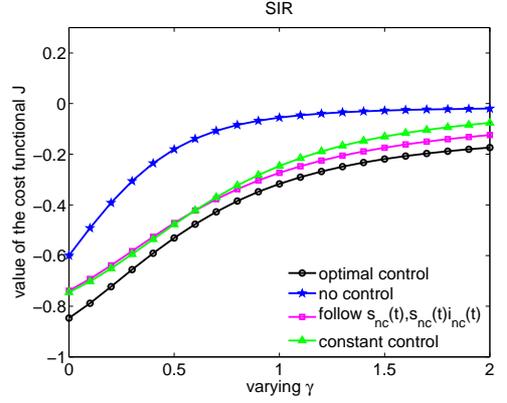} }
\caption{Objective functional $J$ vs recovery rate $\gamma$ for different control strategies.}
\label{fig:J_vs_gamma}
\end{figure}

\begin{figure}[ht!]
\subfloat[SIS epidemic. Parameter values: $\beta=1, \gamma=0.1, b=15, i_0=0.01, u_{max}=0.06$. \label{fig:J_vs_Tdeadline_sis}]{
\includegraphics[width=65mm]{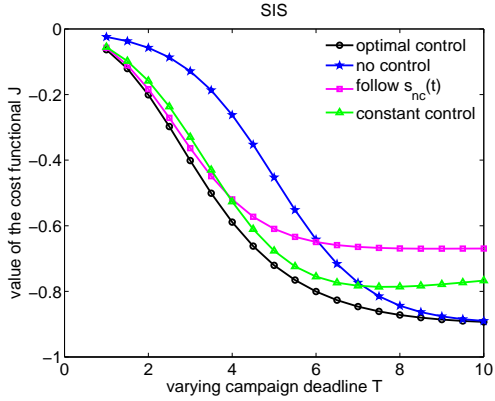} }
\hfill
\subfloat[SIR epidemic. Parameter values: $\beta=1, \gamma=0.1, b=15, c=1, u_{1max}=0.06, u_{2max}=0.3, s_0=0.99, i_0=0.01$. \label{fig:J_vs_Tdeadline_sir}]{
\includegraphics[width=65mm]{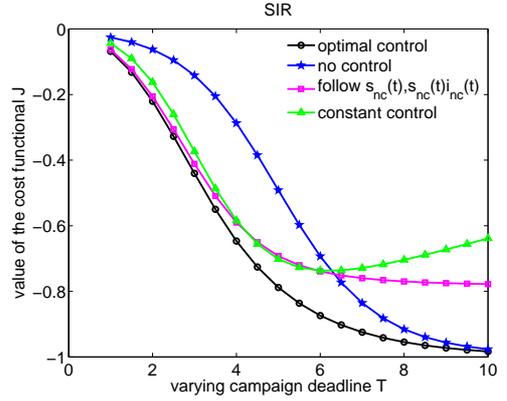} }
\caption{Objective functional $J$ vs campaign deadline $T$ for different control strategies.}
\label{fig:J_vs_Tdeadline}
\end{figure}

\begin{figure}[ht!]
\subfloat[SIS epidemic. Parameter values: $\beta=1, \gamma=0.1, T=5, i_0=0.01, u_{max}=0.06$. \label{fig:J_vs_b_sis}]{
\includegraphics[width=65mm]{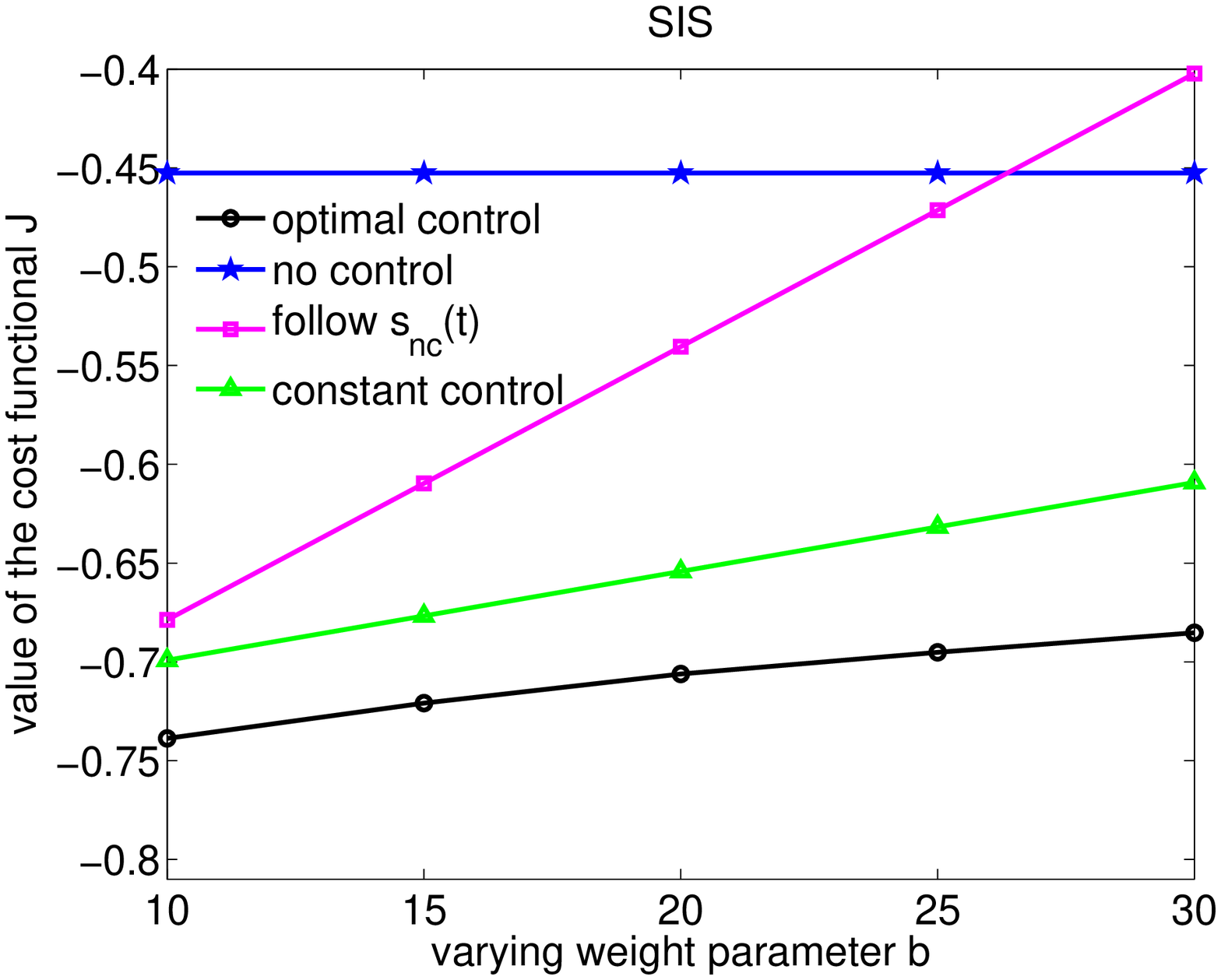} }
\hfill
\subfloat[SIR epidemic. Parameter values: $\beta=1, \gamma=0.1, T=5, u_{1max}=0.06, u_{2max}=0.3, s_0=0.99, i_0=0.01$. \label{fig:J_vs_b_c_sir}]{
\includegraphics[width=65mm]{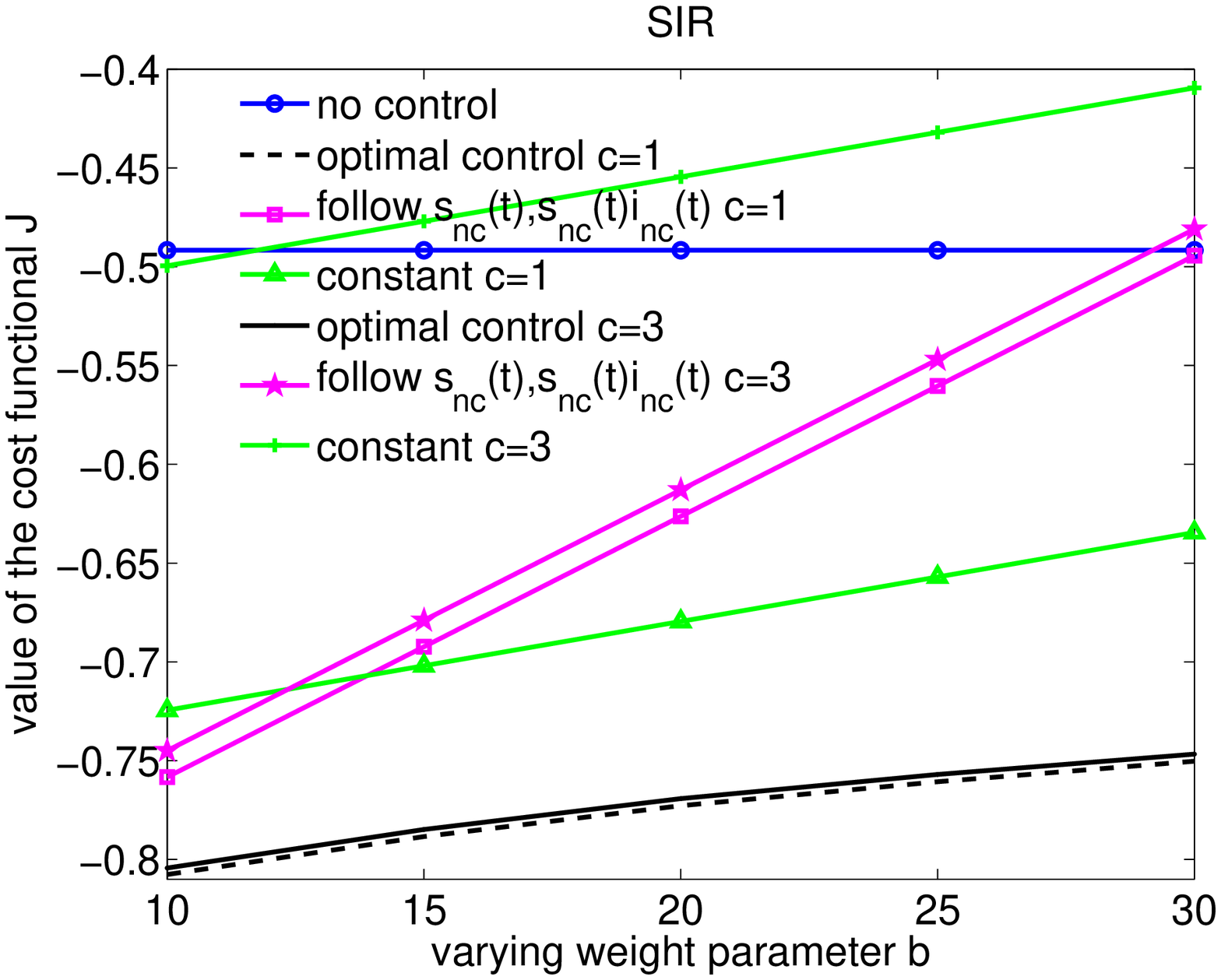} }
\caption{Cost functional $J$ vs weight parameters ($b, c$)}
\label{fig:J_vs_b_c}
\end{figure}

\subsection{Comparison Between Optimal, Constant and a Heuristic Control}
\label{sec:results-J_vs_parameters}
The effective spreading rate is again constant in this section, $\beta(t)=\beta,~\forall t\in[0,T]$. The aim of this section is to quantify the effectiveness of the optimal control strategy over simple and intuitive or ``common sense'' control strategies which do not involve any optimization.

We first introduce a simple heuristic control strategy which requires no knowledge of optimal control theory. Let $s_{nc}(t), i_{nc}(t)$ and $r_{nc}(t)$ be the fractions of susceptible, infected and recovered individuals at time $t$ when \emph{no control} is applied. Since the direct control targets susceptibles, in both the models, its effectiveness depends on the proportion of the susceptible population at the time it is applied. A reasonable heuristic direct control signal could be $\big(u_{1max}\cdot s_{nc}(t)\big)$ where $u_{1max}$ is the maximum allowed direct control in the given model. It adjusts the strength of the direct control signal according to the fraction of susceptibles in the no control scenario at any time instant. Word-of-mouth control requires infected individuals to convince susceptibles; hence, it is effective when the numbers of both susceptibles and infected individuals are significant. A heuristic word-of-mouth control signal for the model in Section \ref{sec:SIR_sys_model} could be $\big(u_{2max}\cdot s_{nc}(t)\cdot i_{nc}(t)\big)$, where $u_{2max}$ is the maximum allowed word-of-mouth control. We name these controls `follow $s_{nc}(t)$' for the SIS model and `follow $s_{nc}(t),s_{nc}(t)i_{nc}(t)$' for the SIR model. Please note that these controls are decided and fixed at the beginning of the campaign period (open loop strategies) and are obtained by referring to the quantities for the uncontrolled system. Another simple control strategy applies constant control throughout the campaign period; the control signal is set to half of the maximum allowed signal strength. Thus constant control has values $\frac{1}{2}u_{max}$ in the SIS model and $\frac{1}{2}u_{1max}$ (direct), $\frac{1}{2}u_{2max}$ (word-of-mouth) in the SIR model. 

Fig. \ref{fig:how_different_controls_and_states_look} shows the shapes of different control signals and the corresponding state evolutions for the SIS model. The cost functional involves weighted sums of the total control effort (area under the control curve) and the final fraction of infected individuals, $i(T)$. Notice that $i(T)$ is similar for the three strategies; however, the total control efforts are considerably different. For a better idea of the performance of the optimal control compared to the other strategies, we plot the cost functional $J$ with respect to one of the parameters $\beta,\gamma,T,b$ or $c$ for various strategies for both SIS and SIR models in Figs. \ref{fig:J_vs_beta}--\ref{fig:J_vs_b_c}.

We make following observations from the plots in this section:
\begin{enumerate}[(i)]
\item From Figs. \ref{fig:J_vs_beta}--\ref{fig:J_vs_b_c}: for some parameter vectors, the cost functional $J$ for the constant control strategy and `follow $s_{nc}(t)$' (for SIS model) or `follow $s_{nc}(t),s_{nc}(t)i_{nc}(t)$' (for SIR model) strategy is more than $J$ for no control. However, $J$ for the optimal control strategy is always smaller than $J$ for no control. This shows the advantage of optimal control over heuristic controls and illustrates the fact that (for some parameter vectors), an ill-planned campaign may prove more costly than no campaign at all.

\item With reference to Figs. \ref{fig:J_vs_beta} and \ref{fig:J_vs_Tdeadline} (for both the models), from the difference between the curves corresponding to no control and optimal control cases (for large values of $\beta$ and $T$ respectively), one can conclude the following: If the uncontrolled system is capable of achieving a high value of $(1-s(t))$ (either due to high $\beta$ or high $T$, given fixed values of other parameters), application of an optimal control strategy does not improve (decrease) costs too much compared to no control. Note that the cost of applying control is zero for `no control' case and $J$ in that case is nothing but $-(1-s(T))$. In other cases, application of the optimal control decreases the value of $J$ compared to the no control strategy.

\item From Figs. \ref{fig:J_vs_beta} and \ref{fig:J_vs_gamma}: As $R_0=\beta/\gamma$ increases, $J$ corresponding to the optimal control strategy decreases. In other words, more viral campaigns are less costly to run than less viral campaigns.

\item From Fig. \ref{fig:J_vs_b_c}: $J$ increases as one of the weight parameters $b$ or $c$ increases (other parameters held fixed). We incur more costs as application of control becomes dearer.

\item From Fig. \ref{fig:J_vs_b_c}: The relative increase in $J$ with respect to increase in $b$ or $c$ for the optimal control strategy is less than that for the `follow $s_{nc}(t)$' or the `follow $s_{nc}(t),s_{nc}(t)i_{nc}(t)$' strategy and constant control strategy. Thus, the optimal control strategy is less sensitive to changes in $b$ or $c$ than other control strategies.
\end{enumerate}

\section{Conclusion}
\label{sec:Conclusions}
In this paper, we have studied optimal control strategies for running campaigns on a homogeneously mixed population when the information spreading rate is a function of time. The change in the spreading rate over time reflects the change in the interest level of the population in the subject of the campaign. The first model assumes that information spreads through an SIS process and the campaigner can directly recruit members of the population, at some cost, to act as spreaders. The second model allows the campaigner to incentivize infected individuals (leading to increased effective spreading rate), in addition to the direct recruitment in the SIR epidemic process. We have shown the existence of solutions for the two models, and uniqueness of the solutions for sufficiently small campaign deadline. For both the cases, for constant spreading rate, we have showed the effectiveness of the optimal control strategy over the constant control strategy, a heuristic control strategy and no control. We have shown the sensitivity of control to the time varying spreading rate profile. Our study can provide useful insights to campaign managers working to disseminate a piece of information in the most cost effective manner. Estimating parameter values such as information spreading rate and recovery rate are nontrivial for real world campaigns, and studying sensitivity of control strategies to estimation errors forms an interesting future research direction.

\appendix

\section{Proof for Theorem \ref{theorem:SIS_uniqueness} (Uniqueness of the Solution to the SIS Model)}
\label{app:uniqueness_SIS}

For sufficiently small deadline, $T$, the uniqueness of the solution to the SIS model can be established using techniques similar to \citep{fister1998optimizing}. If the solution to the optimal control problem is non-unique, consider two solutions $(i, \lambda)$ and $(\hat{i}, \hat{\lambda})$. The time variable $t$ is dropped for notational brevity. Without loss of generality, for $0\leq t\leq T$, let, 
\begin{equation}
i = e^{a t}x,~~ \lambda = e^{- a t}y,~~ \hat{i} = e^{a t}\hat{x} \text{~~and~~} \hat{\lambda} = e^{- a t}\hat{y}, \label{eq:uniq_subs_SIS}
\end{equation} 
where $a$ is a positive real number. Note that $x,y,\hat{x},\hat{y}$ are functions of $t$, $0 \leq x, \hat{x} \leq 1$ and $0\leq y, \hat{y} \leq y_{max}$, for $0\leq t\leq T$. From (\ref{eq:SIS_u}),
\begin{eqnarray}
u &=& \text{min}\left\{\text{max}\left\{\frac{e^{-at}y(1-e^{at}x)}{2b},0\right\},u_{max}\right\}, \nonumber \\
\hat{u} &=& \text{min}\left\{\text{max}\left\{\frac{e^{-at}\hat{y}(1-e^{at}\hat{x})}{2b},0\right\},u_{max}\right\}. \nonumber
\end{eqnarray}
Thus we can estimate $(u-\hat u)^2$ as,
\begin{align}
& (u-\hat u)^2 \nonumber\\
\leq & \left(\frac{e^{-at}}{2b}(y-\hat y) - \frac{1}{2b}(xy-\hat x \hat y)\right)^2\nonumber\\
= & \Bigg(\left(\frac{e^{-at}-\hat x}{2b}\right)^2(y-\hat y)^2+\frac{y}{2b}(x-\hat x)^2 -2\frac{(e^{-at}-\hat x)y}{4b^2}(x-\hat x)(y-\hat y)\Bigg) \nonumber \\
\leq & \Bigg(\left(\frac{e^{-at}-\hat x}{2b}\right)^2(y-\hat y)^2+\frac{y}{2b}(x-\hat x)^2+\left|\frac{(e^{-at}-\hat x)y}{2b^2}\right||(x-\hat x)(y-\hat y)|\Bigg) \nonumber \\
\leq & A_1(x-\hat x)^2 +A_2(y-\hat y)^2 \label{eq:u_ucap_sq}
\end{align}
We have used $(xy-\hat x \hat y)=(xy-\hat xy+\hat xy-\hat x \hat y)$ and $m^2+n^2\geq 2|mn|$.

Using (\ref{eq:uniq_subs_SIS}) in (\ref{eq:SIS_constraint})
\begin{equation}
e^{at} \dot{x} + a x e^{at} = - \beta(t) e^{2at} x^2 + \big(\beta(t) - \gamma - u\big)e^{at} x + u. \nonumber
\end{equation}
Writing similar equation for $\frac{d\hat{i}}{dt}$ and subtracting from above we get,
\begin{align}
& e^{at} \left(\dot{x}-\dot{\hat x}\right) + a e^{at} (x - \hat x) \nonumber \\
= & - \beta(t) e^{2at} (x^2-\hat x^2) + \big(\beta(t) - \gamma \big)e^{at} (x-\hat x) - e^{at}(ux-\hat u \hat x)+ (u-\hat u). \nonumber
\end{align}
Multiplying both sides by $(x-\hat x)$
\begin{align}
& e^{at}(x-\hat x)(\dot{x}-\dot{\hat x}) + a e^{at} (x - \hat x)^2 \nonumber \\
= & -\beta(t)e^{2at}(x+\hat x)(x-\hat x)^2 + e^{at}\big(\beta(t) - \gamma \big)(x-\hat x)^2 - e^{at}u(x-\hat x)^2+(1-e^{at}\hat x)(u-\hat u)(x-\hat x)\nonumber \\
\leq & e^{aT}C_1(x-\hat x)^2 + C_2|(u-\hat u)(x-\hat x)| \nonumber\\
\leq & e^{aT}C_3(x-\hat x)^2 + C_4(u-\hat u)^2. \nonumber
\end{align}
$C_2=\text{max}(|1-e^{at}\hat x|)$. Integrating both sides with respect to $t$ from $0$ to $T$, we get,
\small
\begin{equation}
\frac{1}{2}(x-\hat x)^2(T) + (a-C_3e^{aT}) \int_{0}^{T} (x-\hat x)^2 dt \leq C_4 \int_{0}^{T} (u-\hat u)^2 dt \label{eq:SIS_uniq_state}
\end{equation}
\normalsize

Substituting (\ref{eq:uniq_subs_SIS}) in (\ref{eq:SIS_costate}) we get two equations for $i,\lambda$ and $\hat i, \hat \lambda$, subtracting them we get,
\begin{align}
&(\dot{y}-\dot{\hat y})-a(y-\hat y)\nonumber \\
=& 2\beta(t)e^{at}(xy-\hat x \hat y)-(\beta(t)-\gamma)(y-\hat y)+(uy-\hat u \hat y). \nonumber
\end{align}
Multiplying both sides by $y-\hat y$ we get,
\begin{align}
&(y-\hat y)(\dot{y}-\dot{\hat y})-a(y-\hat y)^2 \nonumber \\
=& 2\beta(t)e^{at}x(y-\hat y)^2 +2\beta(t)e^{at}\hat y(x-\hat x)(y-\hat y) -(\beta(t)-\gamma)(y-\hat y)^2+y(u-\hat u)(y-\hat y) + \hat u(y-\hat y)^2. \nonumber
\end{align}
Integrating both sides with respect to $t$ from $0$ to $T$, we get,
\small
\begin{align}
& \frac{1}{2}(y-\hat y)^2(0)+a\int_0^T(y-\hat y)^2 \nonumber \\
=& -\int_0^T 2\beta(t)e^{at}x(y-\hat y)^2 dt -\int_0^T2 \beta(t) e^{at}\hat y (x-\hat x)(y-\hat y)dt + \int_0^T(\beta(t)-\gamma)(y-\hat y)^2 dt - \int_0^T y(u-\hat u)(y-\hat y)\nonumber \\
& -\int_0^T\hat u(y-\hat y)^2 \nonumber \\
\leq & \int_0^T|2\beta(t)e^{at}\hat y||(x-\hat x)(y-\hat y)|dt+(\beta_{max}-\gamma)\int_0^T(y-\hat y)^2 dt + \int_0^T|y||(u-\hat u)(y-\hat y)| dt.\nonumber
\end{align}
\normalsize
Which leads to,
\small
\begin{align}
& \frac{1}{2}(y-\hat y)^2(0)+(a-C_5e^{aT})\int_0^T(x-\hat x)^2 + (a-C_6e^{aT}-C_7)\int_0^T(y-\hat y)^2 
\leq C_8\int_{0}^{T} (u-\hat u)^2 dt \label{eq:SIS_uniq_costate}.
\end{align}
\normalsize
Substituting (\ref{eq:u_ucap_sq}) in (\ref{eq:SIS_uniq_state})+(\ref{eq:SIS_uniq_costate}) we get,
\begin{align}
& \frac{1}{2}(y-\hat y)^2(T)+\frac{1}{2}(y-\hat y)^2(0) + (a-C_8e^{aT}-C_9)\int_0^T(x-\hat x)^2 + (a-C_{10}e^{aT}-C_{11})\int_0^T(y-\hat y)^2 \leq 0, \label{eq:SIS_uniq_final}
\end{align}
which leads to the conclusion that $x=\hat x$ and $y=\hat y$ for,
\begin{equation}
T\leq \text{inf}\left\{\underset{a}{\text{sup}}\left\{\frac{1}{a}\log_e\left(\frac{a-C_9}{C_8}\right)\right\},\underset{a}{\text{sup}}\left\{\frac{1}{a}\log_e\left(\frac{a-C_{11}}{C_{10}}\right)\right\}\right\} \nonumber
\end{equation}
Thus the solution to state and costate equations and hence the optimal control is unique for sufficiently small deadline, $T$. Notice that $C_i >0,~i=8,...,11$; otherwise it can be estimated as $0$ due to inequality in (\ref{eq:SIS_uniq_final}).

\section{Proof for Theorem \ref{theorem:SIR_wom_uniqueness} (Uniqueness of the Solution to the SIR Model with Direct Recruitment and Word-of-mouth Control)}
\label{app:uniqueness_SIR_wom}
For sufficiently small deadline, $T$, the uniqueness of the solution can be shown. If the solution to the optimal control problem is non-unique, consider the two solutions $(r,s,\lambda_s,\lambda_r)$ and $(\hat r,\hat s,\hat \lambda_s,\hat \lambda_r)$. The time variable $t$ is dropped for notational brevity. Without loss of generality, for $0\leq t\leq T$, let, $s = e^{a t}x,r=e^{at}y,\lambda_s = e^{- a t}p,\lambda_r=e^{-at}q$ and $\hat{s} = e^{a t}\hat{x}, \hat{r} = e^{a t}\hat{y}, \hat{\lambda_s} = e^{- a t}\hat{p}, \hat{\lambda_r} = e^{- a t}\hat{q}$. The technique used is same as in \ref{app:uniqueness_SIS}, hence only the final form of the estimations are shown here.
\begin{align}
(u_1-\hat u_1)^2 & \leq A_1(x-\hat x)^2 +A_2(p-\hat p)^2, \nonumber \\
(u_2-\hat u_2)^2 & \leq (A_3e^{aT}+A_4)(x-\hat x)^2 + A_5e^{aT}(y-\hat y)^2 +(A_6e^{aT}+A_7)(p-\hat p)^2. \nonumber
\end{align}

From the state equations (\ref{eq:SIR_modified_constraint1}) and (\ref{eq:SIR_modified_constraint2}) we get,
\small
\begin{align}
\frac{1}{2}(x-\hat x)^2(T) + (a-C_1e^{aT}) \int_{0}^{T} (x-\hat x)^2 dt + (-C_2e^{aT}) \int_{0}^{T} (y-\hat y)^2 dt \nonumber \\
\leq  C_3 \int_{0}^{T} (u_1-\hat u_1)^2 dt + (C_4e^{aT}+C_5) \int_{0}^{T} (u_2-\hat u_2)^2 dt \nonumber
\end{align}
and,
\begin{align}
\frac{1}{2}(y-\hat y)^2(T) + (-C_6) \int_{0}^{T} (x-\hat x)^2 dt + (a-C_7) \int_{0}^{T} (y-\hat y)^2 dt \leq 0. \nonumber
\end{align}
\normalsize

Costate equations (\ref{eq:SIR_wom_lam_s}) and (\ref{eq:SIR_wom_lam_r}) lead to,
\small
\begin{align}
\frac{1}{2}(p-\hat p)^2(0) + (-C_8e^{aT}) \int_{0}^{T} (x-\hat x)^2 dt + (-C_9e^{aT}) \int_{0}^{T} (y-\hat y)^2 dt  + (a-C_{10}e^{aT}-C_{11}) \int_{0}^{T} (p-\hat p)^2 dt \nonumber \\
+ (-C_{12}) \int_0^T(q-\hat q)^2 dt \leq  C_{13} \int_{0}^{T} (u_1-\hat u_1)^2 dt + (C_{14}e^{aT}+C_{15}) \int_{0}^{T} (u_2-\hat u_2)^2 dt \nonumber
\end{align}
\normalsize
and,
\small
\begin{align}
\frac{1}{2}(q-\hat q)^2(0) + (-C_{16}e^{aT}) \int_{0}^{T} (x-\hat x)^2 dt  + (-C_{17}e^{aT}) \int_{0}^{T} (p-\hat p)^2 dt + (a-C_{18}e^{aT}-C_{19}) \int_0^T(q-\hat q)^2 dt \nonumber \\
\leq (C_{20}e^{aT}) \int_{0}^{T} (u_2-\hat u_2)^2 dt \nonumber
\end{align}
\normalsize

Finally the estimates from state and costate equations are added and the estimates of $(u_1-\hat u_1)^2$ and $(u_1-\hat u_1)^2$ are used in the right hand side to get an inequality of the form,
\small
\begin{align}
& \frac{1}{2}(x-\hat x)^2(T) + \frac{1}{2}(y-\hat y)^2(T) + \frac{1}{2}(p-\hat p)^2(0) + \frac{1}{2}(q-\hat q)^2(0)  + (a-D_1e^{aT}-D_2) \int_{0}^{T} (x-\hat x)^2 dt \nonumber \\
& + (a-D_3e^{aT}-D_4) \int_{0}^{T} (y-\hat y)^2 dt + (a-D_5e^{aT}-D_6) \int_{0}^{T} (p-\hat p)^2 dt + (a-D_7e^{aT}-D_8) \int_0^T(q-\hat q)^2 dt \leq 0. \nonumber
\end{align}
\normalsize
Notice that $D_i>0,~i=1,...,8$ otherwise it can be estimated to $0$ due to above inequality. Finally, $x=\hat x,y=\hat y,p=\hat p,q=\hat q$ for,
\begin{eqnarray}
T\leq \text{inf}\Bigg\{\underset{a}{\text{sup}}\left\{\frac{1}{a}\log_e\left(\frac{a-D_2}{D_1}\right)\right\},\underset{a}{\text{sup}}\left\{\frac{1}{a}\log_e\left(\frac{a-D_4}{D_3}\right)\right\}, \underset{a}{\text{sup}}\left\{\frac{1}{a}\log_e\left(\frac{a-D_6}{D_5}\right)\right\},\underset{a}{\text{sup}}\left\{\frac{1}{a}\log_e\left(\frac{a-D_8}{D_7}\right)\right\}\Bigg\}. \nonumber
\end{eqnarray}
Thus the solution is unique for sufficiently small deadline, $T$.

\bibliographystyle{model1-num-names}

\bibliography{bibliography_database}

\end{document}